\newtheorem{thm}{Theorem}[section] 
\newtheorem{cor}[thm]{Corollary} 
\newtheorem{lem}[thm]{Lemma} 
\newtheorem{prop}[thm]{Proposition}
\newtheorem{con}[thm]{Conjecture} 
\theoremstyle{definition} 
\newtheorem{defn}[thm]{Definition}
\theoremstyle{remark}  
\def\beq{\begin{eqnarray}}  
\def\eeq{\end{eqnarray}}  
\def\bsp{\begin{split}}  
\def\esp{\end{split}}
\def\d{\mathrm{d}}
\newcommand{\mbold}[1]{\mbox{\boldmath{\ensuremath{#1}}}}
\def \bl {\mbox{{\mbold\ell}}}
\def \bn {\mbox{{\bf n}}}
\def \bm {\mbox{{\bf m}}}
\def \bW {\mbox{{\mbold W}}}
\def \bg {\mbox{{\mbold g}}}   
\def \bh {\mbox{{\mbold h}}}   
\begin{document}   
   
\title{\Large\textbf{Locally Homogeneous Kundt Triples and CSI Metrics}}  
\author{{\large\textbf{Sigbj\o rn Hervik and David McNutt }    }
%EndAName    
%\address{    
 \vspace{0.3cm} \\     
Faculty of Science and Technology,\\     
 University of Stavanger,\\  N-4036 Stavanger, Norway         
\vspace{0.3cm} \\      
\texttt{sigbjorn.hervik@uis.no, david.d.mcnutt@uis.no} }     
\date{\today}     
\maketitle   
\pagestyle{fancy}   
\fancyhead{} % clear all header fields   
\fancyhead[EC]{Hervik and McNutt}   
\fancyhead[EL,OR]{\thepage}   
\fancyhead[OC]{Locally Homogeneous Kundt Triples}   
\fancyfoot{} % clear all footer fields   

\begin{abstract}

A pseudo-Riemannian manifold is called CSI if all scalar polynomial invariants constructed from the curvature tensor and its covariant derivatives are constant. In the Lorentzian case, the CSI spacetimes have been studied extensively due to their application to gravity theories. It is conjectured that a CSI spacetime is either locally homogeneous or belongs to the subclass of degenerate Kundt metrics. Independent of this conjecture, any CSI spacetime can be related to a particular locally homogeneous degenerate Kundt metric sharing the same scalar polynomial curvature invariants. In this paper we will invariantly classify the entire subclass of locally homogeneous CSI Kundt spacetimes which are of alignment type {\bf D} to all orders and show that any other CSI Kundt metric can be constructed from them. 
\end{abstract}

\section{Introduction} 
In the last decade there has been a program of classifying spacetimes where the elements in the set of all scalar polynomial curvature invariants (SPIs): 
$$ \mathcal{I} = \{ R, R_{abcd} R^{abcd}, \ldots, R_{abcd;e} R^{abcd;e}, \ldots \}, $$ are all constant, such spacetimes are called {\it CSI spacetimes}. For Riemannian spaces, the set $\mathcal{I}$ locally characterizes the manifold completely, and so any Riemannian space with the CSI property must be locally homogeneous (l.h.) \cite{TP}. Clearly, any l.h. space is CSI irrespective of the signature of the metric. For the pseudo-Riemannian spaces there exists classes of manifolds which cannot be uniquely characterized locally by their SPIs which are CSI but not l.h. \cite{Hervik:2010gz, Hervik:2012zz,HHY}. In the particular case of Lorentzian signature there are many examples of CSI metrics which are not l.h. \cite{CSI4a, CSI4b, CSI4c, CSI4d}. 

The CSI spacetimes are of great importance to general relativity (GR) with many applications to alternative gravity theories as well. Important examples in GR can be found within the subclass of CSI spacetimes where all SPIs vanish, the so-called {\it VSI spacetimes}, such as the PP-wave spacetimes and Kundt-wave spacetimes which model massless radiation propagating at the speed of light in a flat universe \cite{Kramer, McNutt:2012}. Both of these solutions and more general VSI spacetimes have been shown to exhibit chaotic geodesic motion \cite{Podolsky:2007,Sakalli:2007}. As another example, exact gravitational waves in anti-de Sitter space can be represented by the Siklos spacetimes which are CSI spacetimes \cite{siklos1981,Podolsky:1997}. These spacetimes have also been shown to contain Ricci solitons \cite{Calvaruso:2019} and admit solutions with non-generic supersymmetries \cite{Baleanu:2001}.   
 
Both the higher-dimensional  VSI and CSI spacetimes are known to  to admit solutions to supergravity theories \cite{CFH0,CFH}. Furthermore, the PP-wave spacetimes have been proven to be solutions to the vacuum equations of all gravitational theories with a Lagrangian constructed from SPIs \cite{horowitz1990spacetime}. Any spacetime with this property is known as a {\it universal spacetime} and this property has been shown to hold for more general VSI spacetimes \cite{coley2008metrics}. Universal spacetimes have been further investigated and it is now known that they are necessarily CSI \cite{coleyhervik2011, Hervik:2013, Hervik:2015, hervik2015, Hervik:2017a, Hervik:2017b, Hervik:2018}.

The condition that a spacetime with metric, ${\bf g}$, is not uniquely characterized locally by its SPIs implies that there exists a smooth (one-parameter) deformation of the metric, ${\bf \tilde{g}}_{\tau}$, with ${\bf \tilde{g}}_{0} = {\bf g}$ and ${\bf \tilde{g}}_{\tau}$, $\tau >0$ not diffeomorphic to ${\bf g}$ yielding the same set $\mathcal{I}$, such a space is called {\it $\mathcal{I}$-degenerate} \cite{CHP2010, HHY}. This definition is often difficult to implement, instead  $\mathcal{I}$-degeneracy can be stated in terms of the structure of the curvature tensor and its covariant derivatives in relation to the action of a boost on a chosen null coframe $\{ \bn, \bl, \bm^i\}$, $\bl' = \lambda \bl,~~ {\bf n'} = \lambda^{-1} \bn,$. For an arbitrary tensor, ${\bf T}$, of rank $n$ the components transform as
\beq T'_{a_1 a_2...a_n} = \lambda^{b_{a_1 a_2 ... a_n}} T_{a_1 a_2 ... a_n},~~ b_{a_1 a_2...a_n} = \sum_{i=1}^n(\delta_{a_i 0} - \delta_{a_i 1}),  \eeq
\noindent where $\delta_{ab}$ denotes the Kronecker delta symbol. The quantity, $b_{a_1 a_2 ... a_n}$, is called the {\it boost weight} (b.w) of the frame component $T_{a_1 a_2 ... a_p}$. Any tensor can be decomposed in terms of the b.w. of its components and hence be classified by identifying null directions relative to which the components of a given tensor have a particular b.w. configuration. This is called the {\it alignment classification}, in 4D it reproduces the Petrov and Segre classifications, while in higher dimensions it provides a coarser classification of tensors \cite{classa,classb,classc, OrtaggioPravdaPravdova:2013}. 

We will define the maximum b.w. of a tensor, ${\bf T}$, for a null direction $\bl$ as the boost order, and denote it as $\mathcal{B}_{{\bf T}}(\bl)$. The Weyl tensor and any rank two tensor, {\bf T}, can be broadly classified into five {\it alignment types}: {\bf G},{\bf I}, {\bf II}, {\bf III}, and {\bf N} if there exists an $\bl$ such that $\mathcal{B}_{{\bf T}} (\bl) = 2, 1, 0,-1,-2$, respectively, and we will say $\bl$ is ${\bf T}$-aligned. If {\bf T} vanishes, then it belongs to alignment type {\bf O}. For higher rank tensors, like the covariant derivatives of the curvature tensor, it is possible that $|\mathcal{B}_{{\bf T}} (\bl)|$ may be greater than two but the alignment types are still applicable. The definition of an $\mathcal{I}$-degenerate spacetime can now be given as a spacetime that admits a null frame such that all of the positive b.w. terms of the curvature tensor and its covariant derivatives are zero in this common frame, that is they are all of alignment type {\bf II} \cite{Hervik2011}.

Any Lorentzian manifold or spacetime admitting a geodesic, shear-free, twist-free, non-expanding null congruence, $\bl$, belongs to the class of Kundt spacetimes \cite{Kramer}. Within the set of $\mathcal{I}$-degenerate spacetimes, {\it the Kundt spacetimes} play an important role. The Kundt spacetimes for which the curvature tensors and its covariant derivatives are of alignment type {\bf II} are known as the {\it degenerate Kundt spacetimes}. In the three-dimensional (3D) and four-dimensional (4D) cases, all $\mathcal{I}$-degenerate spacetimes are contained in the degenerate Kundt spacetimes \cite{CHPP2009}. It is conjectured that any $D$-dimensional $\mathcal{I}$-degenerate spacetime is a degenerate Kundt spacetime \cite{CHP2010}.  Due to the above result, in 3D and 4D, all $CSI$ spacetimes are either l.h. or they belong to the degenerate Kundt class \cite{CSI4c, CSI4b}. In arbitrary dimension it has been proven that VSI spacetimes belong to the degenerate Kundt class \cite{Higher}. Indeed, all known CSI spacetimes which are not l.h. belong to the Kundt class of metrics and there is strong evidence for the following:  
\begin{con}
A Lorentizan CSI space is either: 
\begin{enumerate}
\item{} A l.h. space; or
\item{} A degenerate Kundt metric. 
\end{enumerate}
\end{con}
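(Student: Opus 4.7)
The plan is to split the conjecture into two implications and attempt each separately: first, that a CSI spacetime which fails to be locally homogeneous must be $\mathcal{I}$-degenerate; second, that any $\mathcal{I}$-degenerate spacetime must be a degenerate Kundt metric. Granting these two steps, the conjecture follows immediately, since a CSI space is either locally homogeneous (case 1) or, failing that, $\mathcal{I}$-degenerate by the first implication and hence degenerate Kundt by the second (case 2). Note that the second implication is itself flagged as an open conjecture in the excerpt, so only partial progress can be expected in arbitrary dimension.

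For the first implication I would use a Cartan--Karlhede style argument. Run the equivalence algorithm on the curvature tensor and its covariant derivatives; at each stage the iso\-tropy is reduced and the number of functionally independent Cartan invariants is fixed. If the spacetime is \emph{not} $\mathcal{I}$-degenerate, then every Cartan invariant that appears can be expressed in terms of scalar polynomial invariants built from $R_{abcd}$ and its derivatives. Since CSI forces all such polynomial invariants to be constant, the Cartan invariants themselves are constant, the algorithm stabilises with zero essential coordinates, and the classical equivalence theorem delivers local homogeneity. The key input here is the Hervik--Coley characterisation of $\mathcal{I}$-degeneracy in terms of the b.w.\ configuration, which guarantees the above extraction of SPIs from Cartan scalars whenever a common aligned frame fails to exist.

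For the second implication one starts from the hypothesis that there is a null frame $\{\bn,\bl,\bm^i\}$ in which $R$ and all $\nabla^{(k)} R$ have only non-positive b.w.\ components (alignment type {\bf II} to all orders). The goal is to promote this algebraic statement about $\bl$ into the geometric Kundt conditions: that $\bl$ is geodesic, non-expanding, shear-free and twist-free. The strategy is to compute the b.w.\ $+1$ parts of the Ricci and second Bianchi identities, which algebraically couple the optical scalars of $\bl$ (expansion, shear, twist) to the positive b.w.\ components of $\nabla R$; setting the latter to zero should force the optical scalars to vanish, at least generically. This is the route taken in the 3D/4D proofs referenced in the excerpt and in the VSI result in arbitrary dimension.

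The main obstacle is unquestionably the second implication in its full generality. In dimension $D \ge 5$ the Weyl tensor admits a much richer alignment classification, and isotropic primary types that do not occur in 4D can support b.w.\ configurations which are aligned to all orders without the corresponding null direction being geodesic. One therefore needs either a clever case analysis over Weyl primary/secondary types together with a careful book-keeping of the Ricci segre type, or a genuinely new structural argument that bypasses type enumeration. A reasonable intermediate target would be to dispose of everything \emph{except} the degenerate type {\bf D} sector, reducing the conjecture to the class whose invariant classification is the subject of the remainder of the paper.
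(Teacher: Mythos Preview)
The statement you are addressing is labelled \texttt{con} in the paper, i.e.\ it is a \emph{conjecture}, and the paper does not supply a proof. The surrounding text explicitly frames it as open: ``there is strong evidence for the following'' and ``Through a sequence of papers several results have supported this conjecture.'' There is therefore no proof in the paper to compare your attempt against.

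That said, your proposal is a fair summary of the state of the art and of the expected proof architecture. Your first implication --- CSI and not locally homogeneous $\Rightarrow$ $\mathcal{I}$-degenerate --- is essentially the contrapositive of the standard characterisation-by-invariants result (if the space is not $\mathcal{I}$-degenerate then SPIs determine the Cartan invariants, CSI makes these constant, and the equivalence algorithm yields local homogeneity); the paper takes this as known, citing the alignment theorem of \cite{Hervik2011}. Your second implication --- $\mathcal{I}$-degenerate $\Rightarrow$ degenerate Kundt --- is precisely the open part, and you correctly flag it as such; the paper likewise records it as a separate conjecture proven only in 3D/4D and for VSI in arbitrary dimension.

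So there is no gap to name in your reasoning beyond the one you already identify: the second implication is genuinely open in $D\ge 5$, and your outline of the obstruction (richer alignment types, null directions aligned to all orders that need not be geodesic) is accurate. Just be aware that you have written a status report and a strategy, not a proof, which is appropriate given that the statement is a conjecture.
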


Through a sequence of papers several results have supported this conjecture. For example, the alignment theorem implies that those CSI spaces which are not l.h., and hence $\mathcal{I}$-degenerate, must be of type II to all orders \cite{Hervik2011}. Furthermore, a frame can be found so that all boost weight 0 components of the curvature tensors are constants (the so-called CSI$_F$ property) \cite{CSI4b}. The subset of $CSI$ spacetimes belonging to the Kundt class are called {\it Kundt-CSI}. For Kundt-$CSI$ metrics, the metric functions must satisfy particular differential equations \cite{CSI4c, CSI4b, CFH}; these conditions will be reviewed in the next section.

While not all CSI spacetimes are l.h., the l.h. spacetimes will be shown to play a vital role: \emph{for any CSI space there exists a l.h space with identical invariants}. For the Kundt CSI class this implies that there is a l.h. Kundt metric having identical invariants to the non-homogeneous Kundt metric. It is these metrics we will investigate here, in fact, we will classify such metrics using the so-called Kundt triples. 

We note that a similar result cannot hold, in general, for arbitrary signatures as the following simple example shows \cite{HHY}: 
\[ g=2\d x(\d y+v\d x)+2\d u(\d v+y^4\d u).\]
This is a four dimensional neutral signature metric and it is CSI. However, it is not l.h. since  it only possesses only 3 Killing vector fields. Furthermore, there is no l.h. metric which shares the same SPIs as this metric \cite{HHY}.  

The paper is organized as follows. In section \ref{sec:CSIKundt}, we review the general form of degenerate Kundt metrics and we discuss the $CSI$ conditions for these spacetimes. In subsection \ref{subsec:KundtTriple} we introduce the  specialized form of the Kundt$^\infty$ metrics and triples and show that any Kundt-CSI metric is related to a l.h. Kundt$^\infty$ triple. In section \ref{sec:CSIcond}, we employ the CSI conditions to determine an eigenvector problem. By considering the dimension of the subspace spanned be these eigenvectors in subsections \ref{Subsec: MinSym}, \ref{SubSec: MaxSym} and \ref{Subsec: IntSym}, we establish the form of all of the Kundt$^\infty$ CSI triples. Finally in section \ref{sec:Conclusion}, we review the results presented in the paper and discuss possible avenues for future work.

\section{Kundt-CSI spacetimes} \label{sec:CSIKundt}

In arbitrary dimension, the metric for a Kundt spacetime can be written as \cite{CSI4d,Podolsky:2014}: 
\beq 
\bg_K=2\d u\left[\d v +\tilde{H}(v,u,x^k)\d u+\tilde{W}_{i}(v,u,x^k)\d x^i\right]+h_{ij}(u,x^k)\d x^i\d x^j,\label{kundt} 
\eeq 
\noindent where $h_{ij}$, or without indices, ${\bf h}$,  is a Riemannian metric for the $n-2$ transverse space, which will be called the transverse metric. In a degenerate Kundt spacetime, there exists a kinematic null frame such that all of the positive boost weight (b.w.) terms of the Riemann tensor and all of its covariant derivatives $\nabla^{(k)} (Riem)$ are zero (in this common frame) \cite{CHPP2009}. That is, relative to the alignment classification it is of type {\bf II} to all orders. For a degenerate Kundt spacetime this property restricts the metric functions $H$ and ${\bf W}$ to be of the form: 
\beq \begin{aligned} \tilde{H}(v,u,x^k)&=v^2H^{(2)}(u,x^k)+vH^{(1)}(u,x^k)+H^{(0)}(u,x^k), \\  \tilde{W}_i(v,u,x^k)&=vW^{(1)}_i(u,x^k)+W^{(0)}_i(u,x^k). \end{aligned}
\label{HWdegen}\eeq 

For a spacetime to be CSI, corollary 3.3 in \cite{Hervik2011} there exists a frame where all of the curvature tensor and its covariant derivatives have constant components or they are all of type {\bf II} with all of their b.w. zero components constant. We are interested in the latter case as this contains all $\mathcal{I}$-degenerate CSI spacetimes. Choosing a null frame adapted to the geodesic, shear-free, twist-free, non-expanding null congruence $\bl$ it follows that a degenerate Kundt spacetime is CSI if the b.w. zero components of the curvature tensor and its covariant derivatives are constant, and this condition will place further restrictions on the metric functions \eqref{HWdegen} and the transverse Riemannian metric. From theorem 4.1 in \cite{CSI4d} the constancy of the transverse space's curvature tensor and its covariant derivatives requires that ${\bf h}$ must be a l.h. metric and theorem 4.1 in \cite{CSI4b} provides a coordinate transformation to always remove dependence on the $u$ coordinate so that, $h_{ij,u} = 0$. In what follows we will always assume there is no $u$-dependence in the transverse metric ${\bf h}$.

There are additional conditions that must be imposed on the metric functions that arise from the b.w. 0 components of the Riemann curvature tensor to ensure that any SPI formed by the curvature tensor are constant:
\beq \begin{aligned} \tilde{R}_{ 1 2 1 2}&= -2H^{(2)} +\frac 14\left(W^{(1)}_{{i}}\right)\left(W^{i(1)}\right),  \\
\tilde{R}_{ 1 2 i j}&= W^{(1)}_{[ i; j]},  \\
\tilde{R}_{ 1 i 2 j}&= \frac 12\left[W^{(1)}_{ i; j}-\frac 12 \left(W^{(1)}_{  i}\right) \left(W^{(1)}_{ j}\right)\right],  \\
\tilde{R}_{ i j kl}&=R_{ i j k l}. \end{aligned} \label{RiemBW0} \eeq 
%\beq \begin{aligned} \tilde{R}_{ 1 2 1 2}&= -H_{,vv}+\frac 14\left(W_{{i},v}\right)\left(W^{ i,v}\right),  \\
%\tilde{R}_{ 1 2 i j}&= W_{[ i}W_{ j],vv}+W_{[ i; j],v},  \\
%\tilde{R}_{ 1 i 2 j}&= \frac 12\left[-W_{ j}W_{ i,vv}+W_{ i; j,v}-\frac 12 \left(W_{  i,v}\right) \left(W_{ j,v}\right)\right],  \\
%\tilde{R}_{ i j kl}&=R_{ i j k l}. \end{aligned} \label{RiemBW0} \eeq 
\noindent where $R_{ijkl}$ denotes the curvature tensor of the transverse space. Requiring that the above components are constant is known as the CSI$_0$ condition and yields differential equations for the metric functions $W_i$ and $H$ for a degenerate Kundt metric:
\beq \begin{aligned}  & \sigma= H^{(2)} - \frac{1}{4} W^{(1)}_{i} W^{i(1)}, \\ 
& a_{ij} = W^{(1)}_{[i;j]}, \\
& s_{ij} = W^{(1)}_{(i;j)} - \frac12 W^{(1)}_{i} W^{(1)}_{j},  \end{aligned} \label{IPDeqn0} \eeq
\noindent where $\sigma, a_{ij}$ and $s_{ij}$ are constants.
%\beq \begin{aligned}  & \sigma= H_{,vv} - \frac{1}{4} W_{i,v} W^i_{~,v}, \\ 
%& a_{ij} = W_{[i;j],v}, \\
%& s_{ij} = W_{(i;j),v} - \frac12 W_{i,v} W_{j,v}.  \end{aligned} \label{IPDeqn0} \eeq
Similarly, we must impose conditions on the components of the covariant derivative of the curvature tensor, known as the CSI$_1$ condition: 
\beq \begin{aligned} &  \tilde{R}_{121i;2} = \alpha_i =  \sigma W^{(1)}_{i} - \frac12 (s_{ij} + a_{ij}) W^{(1) j}, \\
& \tilde{R}_{1ijk;2} = \beta_{ijk}  = W^{(1) l } R_{lijk} - W^{(1)}_{i} a_{jk} + (s_{i[j}+a_{i[j})W^{(1)}_{k]}. \end{aligned} \label{IPDeqn1} \eeq

\noindent Using the CSI$_0$ conditions in conjunction with the Ricci identities, $\beta_{ijk}$ can be rewritten as
\beq \beta_{ijk} = (s_{ij} + a_{ij})_{;k} - (s_{ik} + a_{ik})_{;j}. \label{eq:beta0} \eeq

\subsection{Kundt$^\infty$ triples and Kundt$^\infty$ metrics.} \label{subsec:KundtTriple}
In what follows, we will consider an $(n+2)$-dimensional manifold, $\tilde{M}$, with a Lorentzian metric and an $n$-dimensional manifold, $M$, equipped with a Riemannian metric acting as the transverse space for the Lorentzian metric. 

\begin{defn}
A Kundt$^{\infty}$ metric is of the form: 
\beq
\bg = 2\d u\left(\d v+v^2H\d u+v\bW\right)+\bh,
\eeq
where $\bh$ is a Riemannian metric on some manifold $M$, $\bW$ is a one-form on $M$; i.e, $\bW\in \Omega^1(M)$, and $H$ is a function on $M$, $H\in \Omega^0(M)$. 
\end{defn}
\noindent Due to theorem 4.1 in \cite{CSI4b}, ${\bf h}$ will always be independent of the coordinate $u$. The Kundt$^{\infty}$ metrics appear in the study of the near horizon geometry of a degenerate Killing horizon \cite{MPS2018} and more generally, the near horizon geometry of non-expanding null surfaces \cite{Lewandowski2018}. Due to the generic form of the Kundt${^\infty}$ metrics we can instead study an equivalent definition:
\begin{defn}
A Kundt$^\infty$ triple on $M$, is a triple $(H,\bW,\bh)$ where $\bh$ is a Riemannian metric on $M$, $\bW\in \Omega^1(M)$, and $H\in \Omega^0(M)$.
\end{defn}
\noindent Following from this, we can define a l.h. Kundt$^\infty$ triple.
\begin{defn}
A locally homogeneous Kundt$^\infty$ triple on $M$, is a Kundt$^\infty$ triple where the corresponding Kundt$^{\infty}$ metric is locally homogeneous. 
\end{defn}
For any l.h. Kundt$^\infty$ triple the function $H$ is of the form: 
\beq H=\frac{\sigma}{2}+\frac 18 ||\bW||^2, \label{Heqn} \eeq
where $\sigma$ is a constant, and $||\bW||$ is the norm of $\bW$ as induced by the metric $\bh$. Therefore, the l.h. Kundt$^\infty$ triples are classified by $\bW$ and $\bh$, and for a given dimension we wish to classify the l.h. Kundt$^\infty$ triples or equivalently the l.h. Kundt$^\infty$ spaces.

Some light can be shed on the structure of these metrics. By slightly rewriting the metric, assuming $v\neq 0$: 
\beq
\bg = 2v\d u\left(\frac{\d v}{v}+Hv\d u+\bW\right)+\bh,
\eeq
one sees that the one-forms $v\d u$ and $\d v/v$ are left-invariant one-forms on the 2-dimensional solvable group. There are two obvious choices for the Kundt triples that will yield a (locally) homogeneous space: 
\begin{itemize}
\item{} A l.h. metric, $\bh$, on $M$ with $\bW=0$ and $H$ constant. 
\item{} A l.h. metric, $\bh$, on $M$ with a left-invariant one-form on $M$, $\bW$, and $H$ of the form \eqref{Heqn}.
\end{itemize}
\noindent These are the trivial cases and we will aim to determine the existence of non-trivial l.h. Kundt$^\infty$ triples and classify them. 

To show how the Kundt$^\infty$ metrics are important in the classification of Lorentzian CSI metrics, we will consider the diffeomorphism, $\phi_t$, generated by an appropriate boost in the null plane spanned by $\bl$ and $\bn$ with respect to a point $\tilde{p} \in \tilde{M}$ for a Kundt-CSI metric, $\bg_K$. We will show the resulting metric in the limit $\lim_{t\rightarrow\infty}\phi_t^*\bg_{K}= \bg_0$ is a l.h. Kundt$^\infty$ metric   \cite{HHY}. Explicitly, if in coordinates $(v,u,x^i)\big{|}_p=(0,u_0,x^i_0)$, then the limiting Kundt$^\infty$ triple is given  by 
\[ (H,\bW,\bh)= \left(H^{(2)}(u_0,x^k),~~ W^{(1)}_i(u_0,x^k)\d x^i, ~~h_{ij}(u_0,x^k)\d x^i\d x^j\right).\] 
\noindent The choice of $v=0$ may be chosen without loss of generality since the mapping $v \mapsto v-v_0$ is an $\mathcal{I}$-preserving diffeomorphism for degenerate Kundt metrics \cite{SH2018, MMTA2018, MTA2018}. 

With this limiting process we can prove a helpful result for CSI Kundt$^\infty$ triples.
\begin{prop}\label{prop:Kundt}
Given a degenerate Kundt-CSI metric, there is a corresponding Kundt$^\infty$ triple,  $(H,\bW,\bh)$, with the following properties: 
\begin{enumerate}
\item{} The Kundt$^\infty$ metric is of algebraic type {\bf D}$^k$ (type {\bf D} to all orders). 
\item{} The Kundt$^\infty$ metric is characterized by its invariants.
\item{} The Kundt$^\infty$ metric has identical invariants to the $\mathcal{I}$-degenerate Kundt-CSI metric. 
\item{} $(H,\bW,\bh)$ is a locally homogeneous Kundt$^\infty$ triple.
\end{enumerate}
\end{prop}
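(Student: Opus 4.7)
The plan is to realize $\bg_0$ as the limit $\lim_{t\to\infty}\phi_t^*\bg_K$ of the boost action outlined immediately before the proposition, taken about a fixed point $\tilde p$ with coordinates $(0,u_0,x^k_0)$ (the shift $v\mapsto v-v_0$ is $\mathcal{I}$-preserving for degenerate Kundt metrics, so we may set $v_0=0$ without loss of generality). Expanding $\phi_t^*\bg_K$ using the degenerate form \eqref{HWdegen} shows that the subleading-in-$v$ pieces of $\tilde H$ and $\tilde W_i$ are suppressed by negative powers of $e^t$, producing in the limit the stated Kundt$^\infty$ triple
\[ (H,\bW,\bh)=\bigl(H^{(2)}(u_0,x^k),\; W^{(1)}_i(u_0,x^k)\,\d x^i,\; h_{ij}(x^k)\,\d x^i\d x^j\bigr), \]
with $\bh$ already $u$-independent by theorem~4.1 of \cite{CSI4b}.

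Properties (1) and (3) fall out of this construction. For (1), in the kinematic null frame the degenerate Kundt-CSI assumption forces all positive b.w.\ components of $\nabla^{(k)}\text{Riem}$ to vanish, while a component of strictly negative b.w.\ $b$ rescales as $e^{bt}$ under $\phi_t$ and dies in the limit; what survives is supported entirely on b.w.\ $0$, giving alignment type $\textbf{D}$ to all orders. For (3), diffeomorphism invariance of SPIs gives $\mathcal{I}(\phi_t^*\bg_K)=\mathcal{I}(\bg_K)$ for every $t$, the CSI hypothesis makes this common value constant, and the polynomial curvature expressions pass through the smooth limit so that $\mathcal{I}(\bg_0)=\mathcal{I}(\bg_K)$.

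The substantive step is (4), from which (2) will follow, and it is where I expect the main obstacle to lie. Three sources of Killing vectors for $\bg_0$ are immediate: (a) every boost $\phi_s$ is an isometry of $\bg_0$ since $\phi_s^*\bg_0=\lim_t\phi_{t+s}^*\bg_K=\bg_0$ by one-parameter group equivariance of the limit; (b) $\partial_u$ is Killing because freezing all coefficients at $u=u_0$ removes $u$-dependence from the triple; and (c) the transverse metric $\bh$ is locally homogeneous by theorem~4.1 of \cite{CSI4d} together with the CSI$_0$ constraints \eqref{IPDeqn0}. The delicate point is compatibility of $\bW$ and $H$ with the transverse isometries, which is exactly what the CSI$_0$--CSI$_1$ conditions \eqref{IPDeqn0}--\eqref{IPDeqn1} encode: the data $\sigma,\,s_{ij},\,a_{ij},\,\alpha_i,\,\beta_{ijk}$ built from $\bW$ and $H$ are forced to be constants, so they are preserved by the transverse symmetry group, and together with (a) and (b) this yields a transitive isometry action on $\bg_0$. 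Finally (2) follows because a type $\textbf{D}^k$ locally homogeneous space has all its Cartan--Karlhede invariants both constant and confined to b.w.\ $0$, leaving no room for an $\mathcal{I}$-preserving deformation of $\bg_0$.
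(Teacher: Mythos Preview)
Your treatment of (1) and (3) matches the paper's. The gap is in (4), and since you make (2) depend on (4), both are affected.

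For (4) you propose to exhibit enough Killing vectors of $\bg_0$: the boost (a), the translation $\partial_u$ (b), and the transverse Killing vectors $\xi_k$ of $\bh$ (c). The problem is (c). Constancy of $\sigma,\, s_{ij},\, a_{ij},\, \alpha_i,\, \beta_{ijk}$ means these \emph{derived} tensors are $\xi_k$-invariant; it does \emph{not} force $\pounds_{\xi_k}\bW=0$ or $\pounds_{\xi_k}H=0$, so the $\xi_k$ need not be Killing for $\bg_0$. Indeed, the entire classification in Section~\ref{sec:CSIcond} is organised around the subspace $\mathcal{I}_p$ spanned by the $\pounds_{\xi_k}\bW$, which is generically nonzero. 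In the $S^n$ example with $\bW=2\tan x\,\d x$, most $SO(n+1)$ isometries of the round sphere do not preserve $\bW$ and hence are not isometries of $\bg_0$; that metric \emph{is} locally homogeneous, but its Killing algebra is not simply $\{\text{boost},\,\partial_u,\,\xi_k\}$. So the step ``together with (a) and (b) this yields a transitive isometry action'' is unjustified.

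The paper sidesteps this by reversing your logical order. Property (2) is proved \emph{first}, directly from (1), using the closed-orbit theorem of \cite{Hervik2011}: type $\mathbf{D}^k$ tensors have closed Lorentz orbits, so their SPIs determine them up to frame. With (2) and (3) in hand, (4) follows abstractly: a CSI metric characterised by its invariants admits a frame in which all components of the curvature tensor and its covariant derivatives are constant, and a Singer-type argument (Riemannian case \cite{TP}, extended to type $\mathbf{D}^k$ Lorentzian metrics in \cite{CHP2010}) then yields local homogeneity. No explicit Killing fields beyond the boost are needed.
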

\begin{proof}
The diffeomorphism $\phi_t$ is generated by a boost, ${\bf X}$. In the limit the curvature tensors are necessarily invariant under the action of $\phi_t$ and it is thus an isometry of $\bg_0$ \cite{HHY}. Consequently, it is of type {\bf D}$^k$ and a doubly-degenerate Kundt spacetime \cite{CHPP2009}. To show the second property, we note that type {\bf D}$^k$ tensors have closed orbits under the action of the Lorentz group \cite{Hervik2011} and are therefore characterized by their invariants. The third property also follows from the action of $\phi_t$, this is a continuous mapping which leaves any SPI of the original space invariant for all finite $t$, and hence the limit metric has necessarily identical invariants as well \cite{HHY}. The fourth follows from the fact that it is CSI and of type {\bf D} to all orders. Since the SPIs characterize the components of the curvature tensor and its covariant derivatives \cite{CHP2010} there exists a frame where the components of the curvature tensor and its covariant derivatives are all constant. In analogy with the Riemannian space where the condition that the corresponding curvature tensor and its covariant derivatives have constant components imply that the Riemannian metric is l.h. \cite{TP}, any type ${\bf D}^k$ spacetime with constant components must have a l.h. metric \cite{CHP2010}.  
\end{proof}

The limit of the diffeomorphism $\phi_t$ as $t\to \infty$ can be applied to any  $\mathcal{I}$-degenerate CSI spacetimes as well, since the alignment theorem implies that they are necessarily of alignment type {\bf II} to all orders with constant b.w. zero components \cite{Hervik2011}. Following the proof of theorem 2.1 for Lorentzian metrics in \cite{Hervik:2012zz} and the discussion of theorem 3.1 in \cite{HHY}, since the diffeomorphism $\phi_t$ is generated by a boost, in the limit $t \to \infty$ the resulting tensors are well defined at any point, with all negative b.w. components going to zero while the b.w. zero  components are unchanged and constant by assumption. That is, at each point of the manifold, the resulting curvature tensor and its covariant derivatives are all of type {\bf D} with constant components and we can associate a l.h. Kundt$^\infty$ triple to this point. Furthermore, as the original $\mathcal{I}$-degenerate spacetime is CSI, this procedure will give the same l.h Kundt$^\infty$ triple at any point. 

\begin{cor}
For any $\mathcal{I}$-degenerate CSI spacetime, there is a corresponding l.h. Kundt$^\infty$ triple with an identical set of SPIs $\mathcal{I}$.
\end{cor}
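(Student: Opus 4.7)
The plan is to generalize the argument in Proposition~\ref{prop:Kundt} from degenerate Kundt-CSI metrics to the broader class of $\mathcal{I}$-degenerate CSI spacetimes. The key observation is that, although such a spacetime need not be presented in Kundt form a priori, the alignment theorem of \cite{Hervik2011} guarantees that it is of alignment type {\bf II} to all orders with constant b.w.\ zero components; this is precisely the structural input used in Proposition~\ref{prop:Kundt}, and it is what is needed to rerun the limiting construction pointwise.

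First, I would fix an arbitrary point $\tilde{p}\in\tilde{M}$ and choose a null frame $\{\bl,\bn,\bm^i\}$ at $\tilde{p}$ in which the curvature tensor and all its covariant derivatives have only non-positive b.w.\ components, with the b.w.\ zero components constant on $\tilde{M}$. I would then introduce the one-parameter family $\phi_t$ of diffeomorphisms generated by a boost in the $(\bl,\bn)$-plane and examine $\phi_t^{*}\nabla^{(k)}\mathrm{Riem}$ component by component. By the boost transformation law $T'_{a_1\ldots a_n}=\lambda^{b_{a_1\ldots a_n}}T_{a_1\ldots a_n}$, each frame component scales by a power $\lambda(t)^{b}$; for an appropriate choice of boost parameter, every strictly negative b.w.\ component decays to zero as $t\to\infty$, while the b.w.\ zero components are fixed. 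Thus the limiting tensors are well-defined and only the b.w.\ zero (constant) data survives.

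The next step is to interpret this limit geometrically. At every point the limiting curvature tensors are of type {\bf D} to all orders with constant components, so by the characterization of type {\bf D}$^k$ tensors in \cite{Hervik2011} their Lorentz-orbits are closed and they are determined by their scalar polynomial invariants. Following the same reasoning as in Proposition~\ref{prop:Kundt}, the constancy of all curvature components in a common frame forces the limit metric to be locally homogeneous, and its degenerate Kundt structure (preserved under the boost $\phi_t$) identifies it with a Kundt$^{\infty}$ triple $(H,\bW,\bh)$. To see that the triple does not depend on the chosen base point, I would use the CSI hypothesis directly: the only surviving frame data in the limit are the b.w.\ zero components of $\nabla^{(k)}\mathrm{Riem}$, and these are constant across $\tilde{M}$; hence the triples obtained at any two points have identical frame components and are therefore isometric, giving a single l.h.\ Kundt$^{\infty}$ triple associated with the whole spacetime.

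Finally, to match the SPI sets I would invoke the diffeomorphism invariance of SPIs together with the continuity of $\phi_t$: for every finite $t$, $\phi_t^{*}\bg$ has the same SPI set $\mathcal{I}$ as the original spacetime, and since each element of $\mathcal{I}$ is a constant function its value persists in the limit $t\to\infty$. The main obstacle I anticipate is ensuring that the pointwise limiting procedure assembles into a single smooth l.h.\ Kundt$^{\infty}$ triple rather than merely a family of isometric limits at each point; this is exactly where CSI (constancy of b.w.\ zero components across $\tilde{M}$) is essential, upgrading the pointwise argument used in \cite{Hervik:2012zz,HHY} to a global statement.
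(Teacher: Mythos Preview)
Your proposal is correct and follows essentially the same line as the paper's justification (the paragraph immediately preceding the corollary): invoke the alignment theorem to get type {\bf II} to all orders with constant b.w.\ zero components, apply the boost limit $\phi_t$ so that negative b.w.\ components vanish and only the constant b.w.\ zero data survives, obtain type {\bf D}$^k$ tensors with constant components and hence a l.h.\ Kundt$^\infty$ triple at each point, then use CSI to conclude the triple is the same everywhere and shares the SPI set $\mathcal{I}$.

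One small imprecision: you write that the ``degenerate Kundt structure (preserved under the boost $\phi_t$)'' identifies the limit as a Kundt$^\infty$ triple, but in this corollary the original $\mathcal{I}$-degenerate CSI metric is \emph{not} assumed to be Kundt. The identification with a Kundt$^\infty$ triple should instead come, as in Proposition~\ref{prop:Kundt}, from the fact that the limiting curvature data is type {\bf D}$^k$ with constant components and hence corresponds to a l.h.\ metric realizable in Kundt$^\infty$ form; the paper handles this the same way, so this is a phrasing issue rather than a substantive gap.
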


Given a degenerate Kundt CSI metric $\bg_{K}$, taking the limit of the diffeomorphism $\phi_t$, with respect to a point $\tilde{p} \in \tilde{M}$ generated by the boost defined in \cite{HHY}  produces a  Kundt$^\infty$ CSI metric, $\bg_0$, 

$$\lim_{t\rightarrow\infty}\phi_t^*\bg_{K}= \bg_0.$$

\noindent This limit destroys any information about the functions $H^{(1)}, H^{(0)}$, and $W^{(0)}$ in the original Kundt CSI metric. A new generic Kundt-CSI metric can be constructed by adding arbitrary functions, $H^{(1)}, H^{(0)}$, and $W^{(0)}$ to the metric functions in \eqref{HWdegen} of a l.h. Kundt$^\infty$ metric. The addition of these functions will contribute to the negative b.w. components of the curvature tensor and its covariant derivatives, with no effect to the b.w. zero components, therefore it will be of alignment type {\bf II} to all orders and hence $\mathcal{I}$-degenerate \cite{Hervik2011}. As this metric has the same $H^{(2)}$, $W^{(1)}_i$ and l.h. transverse metric $\bh$, the b.w. zero components of the curvature tensor and its covariant derivatives will be identical and will produce the same constant SPIs as the original l.h. Kundt metric \cite{CHP2010}. Requiring that the Riemann tensor and its covariant derivatives are of a more special secondary algebraic type with respect to $\bn$, will yield differential equations for the metric functions $H^{(1)}, H^{(0)}$, and $W^{(0)}$.  

The relationship between a non-l.h. Kundt-CSI metric and the corresponding l.h Kundt$^\infty$ triple can be summarised as:

\begin{prop} \label{prop:KundtCSI}
Given a Kundt-CSI metric, $\bg_K$. If the metric is not locally homogeneous then the metric is of the form: 
\beq 
\bg_K=2\d u\left[\d v +\tilde{H}(v,u,x^k)\d u+\tilde{W}_{i}(v,u,x^k)\d x^i\right]+h_{ij}(x^k)\d x^i\d x^j,
\label{KCSI}\eeq
with 
\beq \tilde{H}(v,u,x^k)&=&v^2H^{(2)}(x^k)+vH^{(1)}(u,x^k)+H^{(0)}(u,x^k), \nonumber \\  \tilde{W}_i(v,u,x^k)&=&vW^{(1)}_i(x^k)+W^{(0)}_i(u,x^k)
\label{CSI}\eeq 
 and the triple
\[ (H,\bW,\bh)= \left(H^{(2)}(x^k),~~ W^{(1)}_i(x^k)\d x^i, ~~h_{ij}(x^k)\d x^i\d x^j\right)\] 
is a locally homogeneous Kundt$^{\infty}$ triple. 
\end{prop}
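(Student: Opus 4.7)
The plan is to build the statement in three stages from the material already established. First, since any non-locally-homogeneous CSI spacetime is $\mathcal{I}$-degenerate (by the alignment theorem \cite{Hervik2011}), the Kundt-CSI metric $\bg_K$ must be a degenerate Kundt spacetime of type \textbf{II} to all orders; this forces the polynomial structure \eqref{HWdegen} of $\tilde H$ and $\tilde W_i$ in $v$. Applying theorem 4.1 of \cite{CSI4b} then removes $u$-dependence from the transverse metric, putting $\bg_K$ in the form \eqref{KCSI} modulo a possible residual $u$-dependence of $H^{(2)}$ and $W^{(1)}_i$.

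Second, the crux of the argument is eliminating that residual $u$-dependence by a Kundt-preserving coordinate transformation. Freezing $u=u_0$, the CSI$_0$ and CSI$_1$ conditions \eqref{IPDeqn0}--\eqref{IPDeqn1} together with \eqref{eq:beta0} constrain $\bigl(H^{(2)}(u_0,\cdot),W^{(1)}_i(u_0,\cdot),\bh\bigr)$ to be a Kundt$^\infty$ triple whose curvature invariants take prescribed constant values $\sigma,a_{ij},s_{ij},\alpha_i,\beta_{ijk}$. Proposition \ref{prop:Kundt} then implies that each such triple is locally homogeneous, of type $\textbf{D}^k$, and fully characterized by its invariants. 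Since the constants do not depend on $u_0$, the triples obtained from different slices $u=u_0$ are pairwise isometric. I would then use the residual Kundt coordinate freedom --- reparametrizations of $u$, shifts of $v$, and $u$-dependent diffeomorphisms of the transverse space preserving $\bh$ --- to synchronize these isometries into a single global change of coordinates that removes $u$ from $H^{(2)}$ and $W^{(1)}_i$. This mirrors the strategy used in \cite{CSI4b} to remove $u$-dependence from $\bh$ itself.

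Third, once $H^{(2)}$ and $W^{(1)}_i$ depend only on $x^k$, the triple $(H^{(2)},W^{(1)}_i\d x^i,\bh)$ is precisely the $t\to\infty$ limit $\lim_{t\to\infty}\phi_t^*\bg_K$ constructed in the discussion preceding the proposition. By Proposition \ref{prop:Kundt} this limit is locally homogeneous, so the triple is a locally homogeneous Kundt$^\infty$ triple, which completes the argument.

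The main obstacle I anticipate is the second step. While the pointwise isometry between the slice-triples follows cleanly from characterization by invariants, assembling these isometries into a single smooth gauge transformation that preserves the entire Kundt form --- and in particular keeps $\bh$ independent of $u$ after the transformation --- is subtle: one must verify that the family of isometries connecting the triples at $u_0$ and $u_1$ can be chosen to depend smoothly on $(u_0,u_1)$ and to lift consistently to the full $(v,u,x^k)$-chart without reintroducing dependencies that were previously gauged away. The CSI$_F$ property --- that all boost-weight-zero frame components can be rendered constant \cite{CSI4b} --- provides the rigidity input I would rely on to push this construction through.
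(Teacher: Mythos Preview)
Your proposal follows the same skeleton as the paper's proof: invoke the alignment theorem to get type~\textbf{II} to all orders, hence degenerate Kundt; cite theorem~4.1 of \cite{CSI4b} to remove $u$ from $h_{ij}$; then apply the boost limit and Proposition~\ref{prop:Kundt} to conclude the associated triple is locally homogeneous. On the first and third stages you and the paper agree essentially verbatim.

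The difference lies in your second stage. The paper's proof is very terse here: after removing $u$ from $h_{ij}$ it simply says that at every point one applies the boost limit $\lim_{t\to\infty}\phi_t^*\bg_K=\bg_0$ to obtain a l.h.\ Kundt$^\infty$ triple. It does not explicitly construct the coordinate transformation that strips the $u$-dependence from $H^{(2)}$ and $W^{(1)}_i$; the u-independence in the proposition's statement is effectively asserted rather than derived in the printed proof. Your observation that this is the subtle point is correct, and your proposed mechanism---each slice $u=u_0$ yields a l.h.\ triple characterized by the same invariants, hence all slices are pairwise isometric, and the Kundt-preserving gauge freedom together with $\text{CSI}_F$ can be used to synchronize them---is a reasonable and more explicit route to the same conclusion than what the paper records. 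In essence you are filling in a step the paper leaves to the reader (or to the cited literature \cite{CSI4b}). The cost is that, as you note, smoothness and compatibility of the family of slice-isometries must be checked; the paper sidesteps this by appealing directly to the limit construction and Proposition~\ref{prop:Kundt} without isolating the intermediate gauge-fixing problem.
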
 
\begin{proof}
Since the space is not l.h. but CSI it must be of aligned type {\bf II} (or simpler, not {\bf D}) due to corollary 3.3 in \cite{Hervik2011}. As the metric has been assumed to be Kundt it is necessarily degenerate Kundt and of the form \eqref{kundt}. The CSI conditions for the curvature tensor and its covariant derivatives restricted to the transverse space imply that a coordinate transformation can be used to remove all dependence on the coordinate $u$ in $h_{ij}$ in \eqref{kundt} \cite{CSI4b}. 
At every point one can now apply a limit of the boost given in the proof of proposition \ref{prop:Kundt}: $\lim_{t\rightarrow\infty}\phi_t^*\bg_{K}= \bg_0$ to show that the related Kundt triple $(H, {\bf W}, {\bf h})$ is l.h. 

\end{proof}

For the remainder of the paper, when discussing Kundt$^{\infty}$ triples, we will denote the metric function $H^{(2)}$ as $H$ and the one-form ${\bf W}$ components, $W^{(1)}_i$ as $W_i$. 

\section{CSI conditions for Kundt$^\infty$ triples} \label{sec:CSIcond}

We will show that if the CSI$_0$ and CSI$_1$ conditions are satisfied for a Kundt$^\infty$ metric then it will be CSI. From proposition \ref{prop:KundtCSI} this will imply that any Kundt-CSI spacetime is CSI if the CSI$_0$ and CSI$_1$ conditions are fulfilled. The CSI$_0$ and CSI$_1$ conditions in equations \eqref{IPDeqn0} and \eqref{IPDeqn1} can be written in a simpler form for the Kundt$^\infty$ triples \cite{CSI4b}: 

\beq \d \bW =W_{[i;j]}\omega^i\wedge\omega^j&=&\frac{1}2{\sf a}_{ij}{\omega^i}\wedge{\omega^j},\label{eq:a}\\
W_{(i;j)}-\frac 12 W_iW_j&=&{\sf s}_{ij},\label{eq:s}\\
\sigma W_{i}-\frac 12({\sf s}_{ij}+{\sf a}_{ij})W^j&=&\alpha_i,\label{eq:alpha} \\
W^n R_{nijk} - W_i {\sf a}_{jk} + ({\sf s}_{i[j} + {\sf a}_{i[j}) W_{k]} &=&\beta_{ijk}, \label{eq:beta} 
 \eeq
where ${\sf a}_{ij}$, ${\sf s}_{ij}$, $\alpha_i$ and $\beta_{ijk}$ are left-invariant tensors on $\bh$. 

Since $\bh$ must be a l.h. space to ensure that the components of the transverse space's curvature tensor and its covariant derivatives are constant \cite{TP, CSI4d} there exists (locally) a set of $n$ linearly independent Killing vector fields, $\xi_k$, $k=1,...,n$ so that the Lie derivatives of ${\sf a}_{ij},{\sf s}_{ij}$ and $\alpha_i$ with respect to $\xi_k$ all vanish. 
% $\pounds_{{\xi}_k}({\sf a}_{ij},{\sf s}_{ij},\alpha_i)=0$.  

Taking the Lie derivative of equation (\ref{eq:alpha}), yields
\beq 2\sigma v_{i}-({\sf s}_{ij}+{\sf a}_{ij})v^j=0,\label{eq:v}\eeq
where ${\bf v}:=\pounds_{\xi_k}{\bW}$, for some $k$. Consequently, ${\bf v}$  is an eigenvector of $({\sf s}_{ij}+{\sf a}_{ij})$ with eigenvalue $2\sigma$. If ${\bf v}={\bf 0}$, then this is automatically satisfied, but if ${\bf v}\neq {\bf 0}$, then $({\sf s}_{ij}+{\sf a}_{ij})$ must have a non-trivial eigenvector with eigenvalue $2\sigma$.  Let us therefore define the vector space:
\beq
I^1=\left\{\pounds_{\xi_k}\bW~\big{|}~\xi_k\text{ is Killing} \right\}, \label{def:I1}
\eeq
and, iteratively, 
\beq
I^{n+1}=\left\{\pounds_{\xi_k}{\bf v}~\big{|}~\xi_k\text{ is Killing}, {\bf v}\in I^n\right\}.
\eeq
Now, we quickly see that for any ${\bf v}\in I^n$, equation (\ref{eq:v}) must hold. Consider therefore a point, $p\in M$, and let 
\beq {\cal I}_p={\rm span}\{I^1\big{|}_p, I^2 \big{|}_p , ... , I^n\big{|}_p, ...  \}\subset T_pM. \label{Ipset} \eeq
The following analysis now breaks into cases considering the different dimensions of ${\cal I}_p$ with $0\leq \dim{\cal I}_p\leq n$. We will consider the extreme cases first, when $\dim{\cal I}_p=0$ and $\dim{\cal I}_p=n$. 

\subsection{$\dim{\cal I}_p=0$} \label{Subsec: MinSym}
Here, $\pounds_{\xi_k}{\bW}=0$ at $p$, if this is true over a neighbourhood, then $\bW$ must be left-invariant with respect to  the set $\{\xi_k \}$. This further implies that the remaining equations \eqref{eq:a} and \eqref{eq:s} are identically satisfied as well.

\subsection{$\dim{\cal I}_p=n$} \label{SubSec: MaxSym}
This is the maximal dimension and equation (\ref{eq:v}) implies that
\[ ({\sf s}_{ij}+{\sf a}_{ij})=2\sigma\delta_{ij}.\]
Hence the anti-symmetric part must vanish, ${\sf a}_{ij}=0$, and the remaining symmetric piece is diagonal, ${\sf s}_{ij}=2\sigma\delta_{ij}$ or equivalently  ${\sf s}_{ij}= 2 \sigma h_{ij}$ since the frame basis for the transverse space is orthonormal.
\begin{lem} \label{lem:Maxsym}
If $\dim{\cal{I}}_p=n$ for all $p\in M$, then $(M,{\bh})$ is locally maximally symmetric with  Riemann tensor of the form:
\[ {R}_{ijlm}=\sigma\left(h_{il}h_{jm}-h_{im}{h_{jl}}\right).\]
\end{lem}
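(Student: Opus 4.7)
The plan is to exploit equation (\ref{eq:beta})---the only one of the four CSI conditions that contains the transverse Riemann tensor $R_{nijk}$---and then squeeze out its algebraic form by taking Lie derivatives along Killing vectors and invoking the hypothesis that $\mathcal{I}_p$ fills the entire tangent space.

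First I would substitute the consequences of $\dim\mathcal{I}_p = n$ that have already been obtained in Subsection \ref{SubSec: MaxSym}, namely $\mathsf{a}_{ij}=0$ and $\mathsf{s}_{ij}=2\sigma h_{ij}$, into equation (\ref{eq:beta}). This reduces it to
\begin{equation*}
W^n R_{nijk} + \sigma\bigl(h_{ij}W_k - h_{ik}W_j\bigr) = \beta_{ijk}.
\end{equation*}
Next I would take the Lie derivative $\pounds_{\xi}$ of this identity along any Killing vector $\xi$ of $\bh$. Since $\xi$ is Killing, $\pounds_\xi h_{ij} = 0$ and therefore $\pounds_\xi R_{nijk}=0$; and by hypothesis $\beta_{ijk}$ is left-invariant, so $\pounds_\xi\beta_{ijk}=0$. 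Writing $v^n := (\pounds_\xi \bW)^n$, what survives is
\begin{equation*}
v^n R_{nijk} + \sigma\bigl(h_{ij}v_k - h_{ik}v_j\bigr) = 0.
\end{equation*}
Because this relation is linear in $\bW$ while all the non-$\bW$ objects appearing in it ($R_{nijk}$, $h_{ij}$, $\sigma$) are invariant under every Killing vector, iterating $\pounds_\xi$ shows that the same identity holds with $v$ replaced by any element of $I^1, I^2, \ldots$, and hence by linearity for every $v\in\mathcal{I}_p$.

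Finally, the hypothesis $\dim\mathcal{I}_p = n$ gives $\mathcal{I}_p = T_pM$, so the displayed identity is valid for every tangent vector $v$ at $p$. Lowering and raising indices with $h_{ij}$ and reading off the coefficient of $v^m$ yields $R_{mijk} = \sigma(h_{ik}h_{jm} - h_{ij}h_{km})$, which by the Riemann pair-swap symmetry is precisely the claimed form $R_{ijlm} = \sigma(h_{il}h_{jm} - h_{im}h_{jl})$. Applying this at every $p\in M$ shows that $(M,\bh)$ has constant sectional curvature $\sigma$, and is therefore locally maximally symmetric.

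The main step to be careful about is the iteration argument: one must verify that each successive Lie derivative of the reduced $\beta_{ijk}$-equation really does reproduce an equation of the same shape, with $\bW$ replaced by the new vector in $I^{n+1}$. This is immediate from the Leibniz rule together with the left-invariance of $h_{ij}$, $R_{nijk}$ and $\beta_{ijk}$, but it is the one place where the definition of $\mathcal{I}_p$ as the iterated span (rather than just $I^1$) is essential.
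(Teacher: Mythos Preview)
Your argument is correct and follows essentially the same route as the paper: substitute $\mathsf{a}_{ij}=0$, $\mathsf{s}_{ij}=2\sigma h_{ij}$ into equation~\eqref{eq:beta}, take Lie derivatives along Killing vectors to replace $\bW$ by arbitrary elements of $\mathcal{I}_p$, and use $\dim\mathcal{I}_p=n$ to read off the Riemann tensor. The only minor difference is that the paper first invokes the alternative expression~\eqref{eq:beta0} to conclude $\beta_{ijk}=0$ outright (since $(\mathsf{s}_{ij}+\mathsf{a}_{ij})_{;k}=2\sigma h_{ij;k}=0$), whereas you bypass this step by using the left-invariance of $\beta_{ijk}$ to kill it under $\pounds_\xi$; both lead to the same reduced identity and the rest is identical.
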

\begin{proof}
	The equivalent form of $\beta_{ijk} = R_{1ijk;2}$ in \eqref{eq:beta0} along with the conditions that ${\sf a}_{ij}=0$ and  ${\sf s}_{ij}=2\sigma\delta_{ij}$ forces $\beta_{ijk} =0$. Substituting ${\sf s}_{ij}=2\sigma\delta_{ij}$ and ${\sf a}_{ij} = 0$ into the original definition of $\beta_{ijk}=0$ in \eqref{eq:beta}, then by repeated application of the Lie derivative with respect to the Killing vector fields we can determine the components $R_{lijk}$, since there is an associated orthonormal basis $\{ \omega^i \}$ of $\mathcal{I}_p$.
	\end{proof}
To determine ${\bW}$, we must solve the equations (\ref{eq:a}) and (\ref{eq:s}) with  ${\sf a}_{ij}=0$, and ${\sf s}_{ij}=2\sigma h_{ij}$. The first implies $\d\bW=0$, and hence, $\bW$ can be locally written as $\bW=\d \phi$. Then, using the substitution $f=\exp(-\phi/2)$, equation (\ref{eq:s}) gives the following linear equation: 
\beq
f_{;ij}=-\sigma h_{ij}f.
\label{eq:f}
\eeq

Let us consider the $\sigma=0$ (flat) case first as we will see all the other cases can be lifted to the flat case of one dimension higher. 
That is, we will examine $\mathbb{E}^n$, with $\sigma=0$ and employ Cartesian coordinates so that the covariant derivatives are ordinary derivatives and we must solve the equation: 
\[ \partial_i\partial_j f=0.\]
This has the general solution $f=\alpha_ix^i+b$, where $\alpha_i$ is a constant 'vector' and $b$ is a constant. Using the isometries of $\mathbb{E}^n$ consisting of rotations, $SO(n-1)$, and translations, we can rotate so that $\alpha_i=(\alpha_1,0,...,0)$. Then using translations, assuming $\alpha_i\neq 0$, we can set $b=0$. This then gives:
\[ \bW=\d(-2\ln f)=-\frac{2}{x^1} \d x^1.\]
If $\alpha_i=0$ then we get 
\[ \bW=\d (-2\ln f)={\bf 0},\] 
so this is the trivial case. 

For $\sigma\neq 0$, there is no loss of generality to assume $\sigma=\epsilon=\pm 1$ (up to scaling). We can now lift the metric $\bh$ to the flat cone by considering $\bh$ embedded in $\tilde{g}=\epsilon \d r^2+r^2\bh$ (flat space) at $r=1$, and employ the following \cite{deMedeiros}:

\begin{thm} \label{thm:cone}
Let $\tilde{g}=\epsilon \d r^2+r^2\bh$ be the lift of $\bh$ to the flat cone of dimension $(n+1)$. Then the function $\Phi$ is a solution to $\tilde\nabla_a\tilde\nabla_b\Phi=0$ if and only if $\Phi(r,x^i)=rf(x^i)+b$ where $f$ is a solution to equation (\ref{eq:f}).
\end{thm}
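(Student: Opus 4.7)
The plan is to prove both directions by direct computation in cone coordinates $(r,x^i)$, organising everything through the Christoffel symbols of $\tilde g = \epsilon\,\d r^2+r^2\bh$. A short calculation gives $\tilde\Gamma^{r}_{rr}=\tilde\Gamma^{r}_{ri}=\tilde\Gamma^{i}_{rr}=0$, together with $\tilde\Gamma^{r}_{ij}=-\epsilon r\,h_{ij}$, $\tilde\Gamma^{i}_{rj}=r^{-1}\delta^{i}_{j}$, and $\tilde\Gamma^{i}_{jk}$ equal to the Levi-Civita connection coefficients of $\bh$. The Hessian equation $\tilde\nabla_{a}\tilde\nabla_{b}\Phi=0$ then decouples cleanly into the three blocks $(rr)$, $(ri)$, $(ij)$, which I would treat in turn.

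For the ``if'' direction I would substitute $\Phi=rf(x)+b$ with $f$ satisfying \eqref{eq:f} and verify each block. The $(rr)$ block reads $\partial_{r}^{2}\Phi=0$ identically. The $(ri)$ block evaluates to $f_{,i}-r^{-1}\delta^{j}_{i}(rf_{,j})=0$. The $(ij)$ block reduces, after combining the ordinary second derivatives with the transverse Christoffel term, to $r(f_{;ij}+\epsilon h_{ij}f)$, which vanishes by hypothesis since $\sigma=\epsilon$.

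For the converse I would reverse the same logic block by block. Since $\tilde\Gamma^{c}_{rr}=0$, the $(rr)$ component reads $\partial_{r}^{2}\Phi=0$, forcing the ansatz $\Phi=rf(x)+g(x)$ for some smooth $f,g$ on $M$. Substituting into the $(ri)$ component gives $-r^{-1}g_{,i}=0$, so $g$ is necessarily a constant $b$. The residual $(ij)$ component then delivers $f_{;ij}+\epsilon h_{ij}f=0$, which is exactly \eqref{eq:f}.

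The whole argument is a bookkeeping exercise in connection coefficients, so I do not anticipate a conceptual obstacle; the one subtle point worth flagging is the sign pairing. The factor $-\epsilon r\,h_{ij}$ in $\tilde\Gamma^{r}_{ij}$ produces a $+\epsilon h_{ij}f$ contribution in the Hessian, which dovetails with the $-\sigma h_{ij}f$ of \eqref{eq:f} precisely because $\sigma=\epsilon$, reflecting that the cone is flat only when $\bh$ has constant sectional curvature $\epsilon$.
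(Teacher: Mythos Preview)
Your proposal is correct and follows essentially the same approach as the paper: a direct computation of the Hessian on the cone, decomposed into the $(rr)$, $(ri)$, and $(ij)$ blocks, with the $(rr)$ block forcing the linear-in-$r$ ansatz, the $(ri)$ block forcing the additive piece to be constant, and the $(ij)$ block yielding \eqref{eq:f}. The only cosmetic difference is that the paper works in an orthonormal coframe $\{\d r,\omega^i\}$ (hence the extra factors of $r^{-1}$ and $r^{-2}$ in its intermediate formulas), whereas you work directly in the coordinate basis; the logical structure and all the key steps are the same.
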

\begin{proof}
By introducing orthonormal co-frames $\left\{{\omega}^i\right\}$ on $\bh$ and $\left\{\tilde{\omega}^a\right\}=\left\{\d r,{\omega}^i\right\}$ on $\tilde{g}$, then we find the non-zero rotation coefficients: 
\[ \tilde{\Gamma}^i_{~rj}=\frac 1r \delta^i_{~j}, \qquad \tilde{\Gamma}^i_{~jk}=\frac 1r\Gamma^i_{~jk}. \] 
Thus, for a function $F(r,x^i)$:
\[ \d F=F_{;r}\d r+F_{;i}\tilde{\omega}^i=F_{;r}\d r+\frac 1rF_{;i}{\omega}^i.\]
Consequently: 
\beq
\tilde{\nabla}_r\tilde{\nabla}_rF&=&F_{,rr},\\
\tilde{\nabla}_r\tilde{\nabla}_iF&=& -\frac 1{r^2}F_{,i}+\frac 1rF_{,ir},\\
\tilde{\nabla}_i\tilde{\nabla}_jF&=&\frac 1{r^2}F_{,ji}-\frac{1}{r^2}\Gamma^k_{~ji}F_{,k}+\frac 1r\epsilon \delta_{ij}F_{,r}=\frac{1}{r^2}\nabla_i\nabla_jF+\frac 1r\epsilon \delta_{ij}F_{,r}.\label{ij-eq}
\eeq
So requiring $\tilde\nabla_a\tilde\nabla_b\Phi=0$ we get $\Phi_{,rr}=0$ implying $\Phi=rf(x^i)+b(x^i)$ for some functions $f$ and $b$. Next, the $(ir)$-components give $b_{,i}=0$, consequently, $b$ is a constant. 
Finally, the equations $\tilde\nabla_i\tilde\nabla_j\Phi=0$ and equation (\ref{ij-eq}) give: 
\[ \frac{1}{r}\nabla_i\nabla_jf+\frac 1r\epsilon\delta_{ij}f=0, \] which, in an arbitrary frame can be written as: 
\[ \nabla_i\nabla_jf=-\epsilon h_{ij}f.\] 
The theorem is now proven. 
\end{proof}

The lift enables us to consider the cases $\sigma=+1$ and $\sigma=-1$ when $(\bh,M)$ are (locally) the $n$-sphere $S^n$ and the hyperbolic space ${\mathbb H}^n$, respectively.
 Note also that the isometries of $S^n$ and ${\mathbb H}^n$ are the isometries of $\mathbb{E}^{n+1}$ and $\mathbb{E}^{n,1}$  that leave the sphere and hyperboloid, respectively, invariant. The corresponding groups are $SO(n)$ and $SO(n-1,1)$, respectively, and are the isotropy groups of the origin of the flat cone.
 
 For each case, the general solution of the equation $\tilde\nabla_a\tilde\nabla_b\Phi=0$ can be obtained using Cartesian coordinates $(x^a)$. In particular, 
 \[ \tilde{g}=\epsilon(dx^1)^2+(dx^2)^2+...+(dx^{n+1})^2,\]
 so that the connection is trivial and the covariant derivatives are ordinary derivatives. Hence, 
 \[ \Phi=\alpha_a x^a+b,\]
 where $\alpha_a$ are constants and $b$ is a constant.
 
We can now utilize the isometries of $S^n$ and ${\mathbb{H}}^n$ to simplify the 'vector' $\alpha_a$ so that there is one case for $S^n$,
\beq  \Phi=\alpha_1x^1+b, \nonumber \eeq
\noindent and three cases for $\mathbb{H}^n$:
\begin{enumerate}
\item{} Timelike case: $\Phi=\alpha_1x^1+b$.
\item{} Lightlike case: $\Phi=(x^1+x^2)+b$.
\item{} Spacelike case: $\Phi=\alpha_2x^2+b$.
\end{enumerate}

As we are interested in the one-form ${\bW} = -2 d \ln f$ we may ignore the constant $b$ so that $\Phi = r f(x^i)$ in the original coordinates and we can consider the lifted one-form ${ \hat{\bW}} = -2 d \ln r + {\bW}$. Using this one-form we can prove the following result relating $\mathcal{I}_p$ to $I^1|_p$. 

\begin{cor} \label{cor:rankI1}
For a l.h. Kundt triple with a maximally symmetric Riemannian manifold, $(M, {\bh}$),  if the one-form, ${\bW}$, is not left-invariant then
\beq dim I^1|_p = dim \mathcal{I}_p = n. \nonumber \eeq
\end{cor}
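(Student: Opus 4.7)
The plan is to work directly from the classification of $\bW$ provided by Theorem \ref{thm:cone}. In the maximally symmetric setting $\bW = -2\,d\ln f$, where $f$ descends from the linear function $\Phi = \alpha_a x^a + b$ on the ambient flat cone. The hypothesis that $\bW$ is not left-invariant rules out the trivial case $\alpha = 0$: in a maximally symmetric Riemannian space the isotropy at any point acts transitively on the unit sphere of $T_p^* M$, so any left-invariant one-form vanishes identically. Hence $\bW|_p \neq 0$ at a generic $p$, and up to ambient isometry $\bW$ is one of the four canonical forms listed just after Theorem \ref{thm:cone}.

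The analysis then splits naturally into isotropy and non-isotropy Killing vectors. For $\xi$ in the isotropy at $p$ (so $\xi|_p = 0$), the exactness of $\bW = -2\,d\ln f$ implies $d\bW = 0$, and Cartan's formula reduces to
\begin{equation*}
\pounds_\xi \bW|_p = d(i_\xi \bW)|_p = (\nabla_j \xi^i|_p)\, W_i|_p \, dx^j.
\end{equation*}
By maximal symmetry, $\nabla\xi|_p$ sweeps out all of $\mathfrak{so}(n)$ as $\xi$ varies over the isotropy, and the $\mathfrak{so}(n)$-orbit of any non-zero vector in Euclidean $\mathbb{R}^n$ is precisely its orthogonal complement. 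Hence the isotropy Killing vectors alone contribute the $(n-1)$-dimensional subspace $\bW|_p^{\perp} \subset T_p^* M$ to $I^1|_p$.

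For the remaining direction along $\bW|_p$, I would take a non-isotropy Killing vector $\xi$ whose value at $p$ is parallel to $\bW^\sharp|_p$, which exists by transitivity of the isometry group. Writing $\pounds_\xi \bW = -2\,d(\xi f/f)$ and substituting $\partial_i f = -\tfrac 12 f W_i$, $\nabla_i \nabla_j f = -\sigma h_{ij} f$ from equation (\ref{eq:f}), together with the Killing antisymmetry $\nabla_{(i}\xi_{j)} = 0$, yields a closed-form expression for the projection of $\pounds_\xi \bW|_p$ onto $\bW|_p$. A short check in each of the four canonical subcases, using the lifted one-form $\hat\bW = -2\,d\ln\Phi$ in ambient Cartesian coordinates and the restricted action of $SO(n+1)$, $SO(n,1)$, or the Euclidean group $\mathbb{R}^n \rtimes O(n)$, confirms the projection is non-zero, supplying the missing direction.

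Combining, $\dim I^1|_p = n$, and since $I^1|_p \subseteq \mathcal{I}_p \subseteq T_p M$ with $\dim T_p M = n$, it follows that $\dim \mathcal{I}_p = n$ as well. I expect the main obstacle to be the hyperbolic lightlike subcase $\Phi = x^1 + x^2$, where $\alpha$ is null in the ambient Minkowski metric and the naive rotation-based argument degenerates; there one should instead use a boost from the ambient $SO(n,1)$ and verify directly that the resulting $\pounds_\xi \bW|_p$ together with the isotropy orbit still spans $T_p^* M$.
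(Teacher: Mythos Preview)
Your approach differs from the paper's explicit case-by-case work with the lifted one-form $\hat\bW=-2\,\d\ln\Phi$ on the flat cone: you use the full isotropy $\mathfrak{so}(n)$ at $p$ to produce $\bW|_p^\perp\subset I^1|_p$ in a single stroke, then look for one transvection to supply the direction along $\bW|_p$. The isotropy half is correct and more conceptual than the paper's argument.

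There is, however, a genuine gap in the second half, and it is not repairable by choosing a better Killing field. Carrying your own computation through with $\nabla_iW_j=\tfrac12W_iW_j+2\sigma h_{ij}$ (from ${\sf a}=0$, ${\sf s}=2\sigma h$) and the Killing antisymmetry $\nabla_{(i}\xi_{j)}=0$ gives, for \emph{every} Killing field $\xi$,
\[
W^j(\pounds_\xi\bW)_j \;=\; (\xi\!\cdot\!W)\Bigl(\tfrac12\,\|\bW\|^2+2\sigma\Bigr).
\]
In the hyperbolic lightlike subcase $f=e^{-x}$ one has $\bW=2\,\d x$, $\|\bW\|^2=4$, $\sigma=-1$, so the bracket vanishes identically and $I^1|_p\subset\bW|_p^\perp$. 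A direct enumeration of the Killing fields of $\bh=\d x^2+e^{-2x}\d E^2_{n-1}$ confirms $\dim I^1|_p=n-1$: the only generators with nonconstant $x$-component are the special-conformal fields $\xi_{(k)}$ with $\xi_{(k)}^x=y^k$, giving $\pounds_{\xi_{(k)}}\bW=2\,\d y^k\in\bW^\perp$. No boost, nor any other element of $\mathfrak{so}(n,1)$, produces a $\d x$-component at first order. The missing direction appears only in $I^2$: for instance $\pounds_{\xi_{(k)}}(2\,\d y^k)$ contains $-2e^{2x}\,\d x$. Thus $\dim\mathcal I_p=n$ still holds, but the stated equality $\dim I^1|_p=n$ actually fails in this subcase. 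You were right to flag the lightlike case as the obstruction; the resolution is not a different $\xi$ but passing to $I^2$. (The paper's own argument here appeals to ambient translations in $x^1$, $x^2$, which are not Killing fields of $\mathbb H^n$, so it does not close this gap either.)
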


\begin{proof}
When $\sigma=0$, no lift is necessary. Due to the specific form of $\bW$ given in this case, $\bW = - \frac{2}{x^1} dx^1$, the infinitesimal generators of rotations in the $x^1$-$x^i$ plane $i \in [2,n]$ along with translations in the $x^1$ direction generate a collection of $n$ linearly independent one-forms under Lie differentiation. Therefore, from equation \eqref{def:I1} $dim I^1|_p = n$ at any point in $M$.  

In the $\epsilon = 1$ case, relative to the original coordinates where $\Phi = \alpha_1 x^1 +b$, Lie differentiation of ${ \hat{\bW}}= -2d \ln \Phi$ with the infinitesimal generators of rotations in the $x^1$-$x^i$ plane for $i \in [2,n+1]$ along with translations in the $x^1$ direction yield the necessary $n+1$ linearly independent one-forms. Then by fixing $r=1$ we recover $n$ linearly independent one-forms  on $M$.

When $\epsilon = -1$, there is a subgroup of $SO(n-1,1)$ that will generate $I^1|_p$ for each case depending on the form of $\Phi$ in the original coordinates. For the timelike case, we may use the infinitesimal generators of the boosts in the $x^1$-$x^i$ plane for $i \in [2,n+1]$ and translations in the $x^1$ direction. In the null case, we can employ the infinitesimal generators of $Sim(n-2)$ leaving the null direction $x^1+x^2$ fixed and translations in the $x^1$ and $x^2$ directions. Lastly, in the spacelike case, the infinitesimal generators of rotations about the $x^2$ axis, boosts in the $x^1-x^2$ plane and translations in the $x^2$ direction  will be sufficient. Again, fixing $r=1$ yields the needed $n$ one-forms in $I^1|_p$ to span $\mathcal{I}_p$ for any $p \in M$. 
\end{proof}

Interestingly, this trick of lifting the equations to the flat cone implies not only that we can find all solutions to the CSI equations, but the symmetries imply that many cases given in \cite{CSI4b} are the same. In fact, for example, the $S^n$ case, there is only one (up to isometries) Kundt$^{\infty}$ triple. 

Before we give the explicit metrics, let us state a useful observation. Using the function $f$ we can write $\bW=-2\d\ln f$ (so that $\bW$ does not depend on a constant scaling of $f$). By taking the corresponding Kundt$^\infty$ metric, the coordinate change $f^2\tilde{v}=v$,  allows us to rewrite the metric:
\[ \bg=2f^2\d u\left(\d\tilde v+\tilde{v}^2\mathcal{H} du \right)+\bh,\]
where 
\[ \mathcal{H}=\frac 12\left(\sigma f^2+||\d f||^2\right).\]
Taking the covariant derivative:
\[ \nabla_i\mathcal{H}=(\partial^jf)(\sigma h_{ij}f+f_{;ji});\]
hence, $\mathcal{H}$ is constant if equation (\ref{eq:f}) is satisfied, or if $f$ is constant (then $\bW=0$). Setting $\sigma=\epsilon$ and lifting to the flat cone, we can write:
\[ \mathcal{H}=\frac 12\langle\d \Phi,\d\Phi\rangle_{\tilde{g}}=\frac 12(\partial_a\Phi)(\partial^a\Phi).\]
In particular, this means the timelike, lightlike, and spacelike cases of the $\mathbb{H}^n$ metric have $\mathcal{H}<0$, $\mathcal{H}=0$, and $\mathcal{H}>0$ respectively.

By restricting to $r=1$, we can choose a suitable set of coordinates on $S^n$ and ${\mathbb H}^n$. Then the Kundt metrics for each case are:
 \begin{itemize}
\item{} $S^n$: $f=\cos x$
\[ \bg=2\d u\left[\d v+\frac{v^2}2\left(1+\tan^2x\right)\d u+2v\tan x\d x\right]+\d x^2+\sin^2x\d \Omega^2_{n-1}.\]
\item{} ${\mathbb H}^n$:  
\begin{enumerate}
\item{} Timelike case: $f=\cosh x$
\[ \bg=2\d u\left[\d v+\frac{v^2}2\left(-1+\tanh^2x\right)\d u-2v\tanh x\d x\right]+\d x^2+\sinh^2x\d \Omega^2_{n-1}.\]
\item{} 	Lightlike case: $f=e^{-x}$
\[ \bg=2\d u\left[\d v+2v\d x\right]+\d x^2+e^{-2x}\d E^2_{n-1}.\]
\item{} Spacelike case: $f=\sinh x$
\[ \bg=2\d u\left[\d v+\frac{v^2}2\left(-1+\coth^2x\right)\d u-2v\coth x\d x\right]+\d x^2+\cosh^2x\d H^2_{n-1}.\]
\end{enumerate}
 \end{itemize}

\subsection{$ 0 < \dim{\cal I}_p<n$} \label{Subsec: IntSym}

To study the possible choices for the l.h. transverse space of a given CSI Kundt triple, $(H, {\bW}, {\bh})$ with $ \dim{\cal I}_p \in (0,n)$, we will first prove a useful result about the eigenvectors of $2\sigma$. Motivated by corollary \ref{cor:rankI1}, we will examine the dimension of $I^1|_p$ for an arbitrary point $p\in M$.  In what follows the rank of the eigensubspace $I^1|_p$ is $0 < m \leq dim \mathcal{I}_p < n$, while $k$ is the number of linearly independent Killing vector fields acting on the transverse space. We will assume that $dim~ I^1|_p = m$ for all $p \in M$. 

\begin{lem} \label{lem:KVbasis}
Defining $\tilde{v}^I = \pounds_{\xi_I} {\bW}$, $I \in [1,m]$,  a basis for the Killing vector fields can be chosen such that 
\beq \tilde{v}^I = \pounds_{\xi_I} {\bW},~~\text{and}~~ \pounds_{\xi_{\hat{I}}} {\bW} = {\bf 0},~~\hat{I} \in [m+1, k]. \nonumber \eeq
\end{lem}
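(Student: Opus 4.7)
The argument is a rank--nullity decomposition applied to the $\mathbb{R}$-linear map $\Phi:\mathfrak{g}\to\Omega^1(M)$, $\xi\mapsto\pounds_\xi\bW$, where $\mathfrak{g}$ denotes the $k$-dimensional real Lie algebra of Killing vector fields of $\bh$. The map $\Phi$ is $\mathbb{R}$-linear because the Lie derivative is $\mathbb{R}$-linear in the vector-field slot.

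First I would compose with evaluation at a chosen point $p\in M$ to obtain $\Phi_p=\mathrm{ev}_p\circ\Phi:\mathfrak{g}\to T^*_pM$. By the definition of $I^1$ in \eqref{def:I1}, the image of $\Phi_p$ is $I^1|_p$, so the hypothesis $\dim I^1|_p=m$ together with rank--nullity gives $\dim\ker\Phi_p=k-m$.

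Second I would promote this pointwise decomposition to a global one. Since $\dim I^1|_q=m$ for every $q\in M$ and the local isometry group generated by $\mathfrak{g}$ acts transitively on $(M,\bh)$, any Killing vector $\xi\in\ker\Phi_p$ must actually lie in the global kernel $\ker\Phi=\{\xi\in\mathfrak{g}:\pounds_\xi\bW\equiv 0\}$: otherwise a non-vanishing value of $\pounds_\xi\bW$ at some $q$ could be transported back to $p$ along a Killing flow, producing more than $m$ independent covectors in $I^1|_p$ and contradicting the constant-rank assumption. Hence $\ker\Phi_p=\ker\Phi$ and $\dim\ker\Phi=k-m$.

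Third, I would pick any basis $\xi_{m+1},\dots,\xi_k$ of $\ker\Phi$, which by construction satisfies $\pounds_{\xi_{\hat I}}\bW=\mathbf{0}$ globally, and complete it to a basis $\xi_1,\dots,\xi_k$ of $\mathfrak{g}$ by an arbitrary transversal. Then $\tilde v^I:=\pounds_{\xi_I}\bW$, $I=1,\dots,m$, are one-forms whose pointwise evaluations at every $p$ form a basis of $I^1|_p$, as required. The main obstacle is the second step, namely, the passage from the pointwise kernel $\ker\Phi_p$ to a $p$-independent kernel of $\Phi$; this is the only place where local homogeneity of $\bh$ is used in an essential way, while the remainder is elementary linear algebra.
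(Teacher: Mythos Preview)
Your outline correctly identifies the problem as a rank--nullity computation, and steps~1 and~3 are fine. The gap is exactly where you flag it: step~2, the passage from $\ker\Phi_p$ to the global kernel $\ker\Phi$. Your transport argument does not work as stated. If $\psi$ is a local isometry with $\psi(p)=q$, then pulling back gives
\[
\psi^*(\pounds_\xi\bW)=\pounds_{\mathrm{Ad}_{\psi}^{-1}\xi}\bigl(\psi^*\bW\bigr),
\]
and since $\bW$ is \emph{not} assumed left-invariant, $\psi^*\bW\neq\bW$ in general. Hence the transported covector is not of the form $\pounds_\eta\bW$ for any Killing $\eta$, so it need not lie in $I^1|_p$, and no contradiction with $\dim I^1|_p=m$ arises. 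Local homogeneity of $\bh$ together with constancy of $\dim I^1|_p$ is not by itself enough to force $p\mapsto\ker\Phi_p$ to be constant; the kernel is a smooth map from $M$ into a Grassmannian of $\mathfrak g$ and there is no mechanism in your argument preventing it from varying.

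The paper closes this gap by using the CSI$_0$ condition \eqref{eq:a}, namely that $\d\bW=\tfrac12{\sf a}_{ij}\,\omega^i\wedge\omega^j$ is a left-invariant $2$-form. This gives $\pounds_\xi\d\bW=\d(\pounds_\xi\bW)=0$ for every Killing $\xi$, so each $\tilde v^I$ is closed. Writing $\pounds_{\xi_\alpha}\bW=C_{J\alpha}\,\tilde v^J$ pointwise and applying $\d$ then yields a relation forcing the coefficient matrix $C_{J\alpha}$ to be \emph{constant} on $M$; rank--nullity applied to this single constant matrix produces the desired basis globally. In other words, the essential input you are missing is not homogeneity of $\bh$ but the left-invariance of $\d\bW$ coming from the CSI$_0$ equations; once you invoke that, your rank--nullity framework goes through without any transport argument.
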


\begin{proof}
As the rank of the vector space is $m$, we may always write:  
\beq \pounds_{\xi_{\alpha}} {\bW} = C_{J\alpha} \tilde{v}^J,~~\alpha \in [1,k], \nonumber \eeq

\noindent where $C_{IJ} = \delta_{IJ}$. The coefficients must be constant, since ${\bW}$ satisfies 
\beq \d \bW = {\sf a}_{ij} m^i \wedge m^j \nonumber \eeq
\noindent relative to the left-invariant basis, and so 
\beq \pounds_{\xi} \d {\bW} = \d \pounds_{\xi} {\bW} = {\bf 0} \nonumber \eeq
\noindent for any Killing vector field. This implies $\d \tilde{v}^I = {\bf 0}$ and therefore
\beq  \d \pounds_{\xi_{\alpha}} {\bW} = \d (C_{J\alpha}) \tilde{v}^J = {\bf 0}. \nonumber \eeq
Choosing a new basis where $\tilde{\xi}_I = \xi_I$ and $\tilde{\xi}_{\hat{I}} = B_{\hat{I}}^{~\alpha} \xi_{\alpha}$ with the components of the matrix {\bf B} constant, the requirement that the Lie derivative of the remaining vector fields vanishes gives a constraint on ${\bf B}$ and ${\bf C}$:
\beq \pounds_{\tilde{\xi}_{\hat{I}}} {\bW} = B_{\hat{I}}^{~\alpha} C_{J\alpha} \tilde{v}^J =0. \nonumber \eeq
For fixed $\hat{I}$ this represents vector-matrix multiplication where the rank of the $k \times m$ matrix ${\bf C}$ must be $m$ and the dimension $k$, the rank nullity theorem implies the dimension of the null space of ${\bf C}$ is 
\beq Null( {\bf C}) = k-m. \nonumber \eeq
\noindent Thus we can choose the rows of $B_{\hat{I}}^{~\alpha}$ to be the basis vector fields of the null space so that
\beq \pounds_{\tilde{\xi}_{\hat{L}}} {\bW} = {\bf 0}. \nonumber \eeq
\end{proof}

There is one more issue to address: whether the set of Killing vector fields $\xi_{\hat{I}}$ for which $\pounds_{\xi_{\hat{I}}} {\bW} = {\bf 0}$ generates additional eigenvectors through Lie differentiation of $v^I$: 
\begin{lem}
If the rank of $I^1|_p$ is $m$, no new linearly independent eigenvectors appear from $\pounds_{\xi_{\hat{K}}} \tilde{v}^I = \pounds_{\xi_{\hat{K}}} \pounds_{\xi_I} {\bW}$. 
\end{lem}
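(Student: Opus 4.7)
The plan is to use the standard Lie-bracket identity for Lie derivatives to reduce $\pounds_{\xi_{\hat{K}}} \tilde{v}^I$ to a single Lie derivative of $\bW$ along another Killing vector field, and then invoke the preceding lemma to expand that in the basis $\{\tilde{v}^J\}$.

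First I would write
\[
\pounds_{\xi_{\hat{K}}} \tilde{v}^I \;=\; \pounds_{\xi_{\hat{K}}} \pounds_{\xi_I} \bW \;=\; \pounds_{\xi_I} \pounds_{\xi_{\hat{K}}} \bW + \pounds_{[\xi_{\hat{K}},\, \xi_I]} \bW,
\]
using that $[\pounds_X, \pounds_Y] = \pounds_{[X,Y]}$ on any differential form. By the construction in the previous lemma, $\pounds_{\xi_{\hat{K}}} \bW = {\bf 0}$ for $\hat{K} \in [m+1,k]$, so the first term on the right vanishes and we are reduced to studying $\pounds_{[\xi_{\hat{K}}, \xi_I]} \bW$.

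Next I would observe that the commutator $[\xi_{\hat{K}}, \xi_I]$ of two Killing vector fields is again a Killing vector field, hence lies in the $k$-dimensional Lie algebra spanned by $\{\xi_1,\dots,\xi_m,\xi_{m+1},\dots,\xi_k\}$. Thus there exist constants $D_{\hat{K} I}^{\;\;\;\alpha}$ with
\[
[\xi_{\hat{K}}, \xi_I] \;=\; D_{\hat{K} I}^{\;\;\;J}\, \xi_J + D_{\hat{K} I}^{\;\;\;\hat{L}}\, \xi_{\hat{L}},
\]
where $J$ ranges over $[1,m]$ and $\hat{L}$ over $[m+1,k]$. Taking the Lie derivative of $\bW$ on both sides and using $\pounds_{\xi_{\hat{L}}}\bW={\bf 0}$ together with $\pounds_{\xi_J}\bW=\tilde{v}^J$ yields
\[
\pounds_{\xi_{\hat{K}}} \tilde{v}^I \;=\; D_{\hat{K} I}^{\;\;\;J}\, \tilde{v}^J,
\]
which already lies in the span of $\{\tilde{v}^J\}_{J=1}^m$, proving the claim.

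I do not expect any serious obstacle here; the only subtle point is to be sure that the $\xi_{\hat{L}}$ constructed in Lemma~\ref{lem:KVbasis} really do span, together with the $\xi_J$, the full Lie algebra of Killing fields so that the commutator expansion above is legitimate, but this is immediate from the rank-nullity argument already given. An analogous induction then shows that iterated Lie derivatives $\pounds_{\xi_{\hat{K}_1}} \cdots \pounds_{\xi_{\hat{K}_r}} \tilde{v}^I$ also remain in the span of $\{\tilde{v}^J\}$, so $\dim \mathcal{I}_p = m$ as claimed.
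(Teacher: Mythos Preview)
Your argument is correct and follows essentially the same route as the paper's proof: both use the commutator identity $\pounds_X\pounds_Y - \pounds_Y\pounds_X = \pounds_{[X,Y]}$, expand $[\xi_{\hat K},\xi_I]$ in the Killing basis with constant structure coefficients, and then kill the $\xi_{\hat L}$--contributions via $\pounds_{\xi_{\hat L}}\bW = 0$ to land back in $\mathrm{span}\{\tilde v^J\}$.

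One small overreach: your final sentence concludes ``so $\dim\mathcal{I}_p = m$,'' but neither the lemma nor your induction establishes this. You have only shown that applying the $\xi_{\hat K}$'s (in any order, any number of times) to the $\tilde v^I$ stays in $\mathrm{span}\{\tilde v^J\}$; you have said nothing about $\pounds_{\xi_J}\tilde v^I$ for $J\in[1,m]$, which can genuinely enlarge the span. Indeed the paper explicitly remarks after this lemma that additional linearly independent eigenvectors \emph{can} arise from repeated application of the $\xi_I$'s, so in general $\dim\mathcal{I}_p\geq m$ with strict inequality possible. This does not affect the lemma itself, which only claims what you actually proved.
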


\begin{proof}

Choosing $ \tilde{v}^I = \pounds_{\xi_I} {\bW},~ I \in[1,m]$ and $\pounds_{\xi_{\hat{K}}} {\bW} = {\bf 0},~\hat{K} \in [m+1, k]$, we will consider $\pounds_{\xi_\alpha } \pounds_{\xi_I} {\bW}$,~ $\alpha \in [1, k]$, using the Jacobi identity:
\beq \pounds_{\xi_\alpha } \pounds_{\xi_I} {\bW} &=& \pounds_{\xi_I } \pounds_{\xi_\alpha } {\bW} + \pounds_{[\xi_\alpha, \xi_I]} {\bW} \nonumber \\
&=& \pounds_{\xi_I } \pounds_{\xi_\alpha } {\bW} + C^\beta_{~\alpha I} \pounds_{\xi_\beta} {\bW} \nonumber \\
&=& \pounds_{\xi_I } \pounds_{\xi_\alpha } {\bW} + C^J_{~\alpha I} \pounds_{\xi_J} {\bW}. \nonumber  \eeq 
\noindent Restricting indices to the interval $[m+1, k]$, i.e., $\alpha = \hat{K}$, since $\pounds_{\xi_{\hat{K}}} {\bW} = {\bf 0}$  this expression becomes: 
\beq  \pounds_{\xi_{\hat{K}} } \pounds_{\xi_I} {\bW} = \sum_{J} C^J_{~\hat{K} I} \tilde{v}^J, \nonumber  \eeq
\noindent 
\end{proof}
The only way to generate additional linearly independent eigenvectors is by repeated application of the Lie derivatives with respect to the set $\{ \xi_I \}$, and this will not impact the set of Killing vector fields where $\pounds_{\xi_{\hat{K}} }  {\bW} = 0$.

\begin{lem} \label{lem:Curv}

Relative to the orthonormal basis adapted to the eigenvectors arising from Lie differentiation of ${\bf {\bW}}$ with respect to the set of Killing vector fields $\{ \xi_{I}\}_{I=1}^{m}$,  the non-zero Riemann tensor components are $$R_{IJKL},~ R_{I \hat{j} K \hat{l}}, \text{ and } R_{\hat{i} \hat{j} \hat{k} \hat{l} }$$ where $I,J,K,L \in [1, m]$ and $ \hat{i}, \hat{j}, \hat{k}, \hat{l} \in [m+1, n]$. The components $R_{IJKL}$ are of the form:
\beq R_{IJKL} &=& \sigma [\delta_{IK} \delta_{JL} -\delta_{IL} \delta_{JK} ]. \nonumber \eeq

\end{lem}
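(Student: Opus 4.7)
The plan is to mimic the strategy used in the proof of Lemma \ref{lem:Maxsym}: Lie-differentiate equation \eqref{eq:beta} along each Killing vector field $\xi_I$ in the basis constructed in Lemma \ref{lem:KVbasis}, exploiting the fact that $R_{ijkl}$, ${\sf s}_{ij}$, ${\sf a}_{ij}$, and $\beta_{ijk}$ are all left-invariant on $(M,\bh)$, so that $\pounds_{\xi_I}$ annihilates each of them and only the terms containing $\pounds_{\xi_I}\bW=\tilde{v}^I$ survive. After a constant linear redefinition of the $\xi_I$ I would arrange the $\tilde{v}^I$ to form the first $m$ members of a left-invariant orthonormal frame, i.e.\ $\tilde{v}^I_{\,j}=\delta^I_j$, which converts the surviving identity into the master formula
\[
R_{Iijk}=\delta_{Ii}\,{\sf a}_{jk}-\tfrac12\bigl[({\sf s}_{ij}+{\sf a}_{ij})\delta_{Ik}-({\sf s}_{ik}+{\sf a}_{ik})\delta_{Ij}\bigr].
\]
This holds whenever the leading index is a capital, and every claim of the lemma will be extracted from it by evaluating on the appropriate hat/capital index pattern.

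Next I would combine this with the eigenvalue relation \eqref{eq:v} in the same frame. On the capital block it gives ${\sf s}_{IJ}=2\sigma\delta_{IJ}$ and ${\sf a}_{IJ}=0$, so the all-capital Riemann components reduce immediately to $R_{IJKL}=\sigma(\delta_{IK}\delta_{JL}-\delta_{IL}\delta_{JK})$, yielding the explicit form asserted in the lemma. For the $(I,J,\hat{k},\hat{l})$ pattern the master formula collapses to $R_{IJ\hat{k}\hat{l}}=\delta_{IJ}{\sf a}_{\hat{k}\hat{l}}$, and Riemann antisymmetry in the first pair forces ${\sf a}_{\hat{k}\hat{l}}=0$, killing this block. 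Applying the same antisymmetry trick to the $(I,J,K,\hat{l})$ pattern forces ${\sf a}_{K\hat{l}}=0$, which in combination with the mixed eigenvalue relation ${\sf s}_{\hat{i}I}+{\sf a}_{\hat{i}I}=0$ upgrades to ${\sf s}_{K\hat{l}}=0$. Once these vanishings are in hand, the $(I,\hat{j},\hat{k},\hat{l})$ pattern collapses to zero automatically. The mixed block $R_{I\hat{j}K\hat{l}}$ reduces to $\tfrac12{\sf s}_{\hat{j}\hat{l}}\delta_{IK}$ and is permitted to be non-vanishing, while the purely-hat components $R_{\hat{i}\hat{j}\hat{k}\hat{l}}$ are simply inaccessible to the master formula (which requires a capital leading index) and are therefore left unconstrained, exactly as the statement allows.

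The main obstacle I anticipate is that the eigenvalue relation alone does not force the mixed components ${\sf s}_{I\hat{j}}$ and ${\sf a}_{I\hat{j}}$ to vanish individually -- it only fixes the combination ${\sf s}_{\hat{i}I}+{\sf a}_{\hat{i}I}=0$. Their separate vanishing must be bootstrapped from the Riemann algebraic symmetries applied to the master formula, as sketched above. A minor subtlety to check is that the argument remains valid when $m=1$, where one cannot choose $K\neq I$: setting $I=J=K$ in the $(I,J,K,\hat{l})$ identity still gives, together with the mixed eigenvalue relation, a pair of independent linear equations in ${\sf s}_{I\hat{l}}$ and ${\sf a}_{I\hat{l}}$ whose only solution is the trivial one.
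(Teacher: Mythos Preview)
Your proposal is correct and follows essentially the same route as the paper: Lie-differentiate equation \eqref{eq:beta} along the Killing fields $\xi_I$, use left-invariance of $R_{ijkl}$, ${\sf s}_{ij}$, ${\sf a}_{ij}$, $\beta_{ijk}$ to obtain a master identity with leading capital index, and then run through the six index patterns together with the eigenvalue relation \eqref{eq:v} and Riemann symmetries. The only cosmetic differences are that the paper retains the invertible change-of-basis matrix $A^I_{~L}$ explicitly rather than normalising $\tilde v^I_{\,j}=\delta^I_j$, and it processes the cases in a slightly different order; your extra remark on the $m=1$ subtlety is a nice addition not spelled out in the paper.
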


\begin{proof}
Taking the Lie derivative of ${\bf {\bW}}$ with respect to the $m$ Killing vector fields yields a basis of $m$ eigencovectors, $\{\tilde{v}^I\}$, which are related to the orthonormal basis $\{v^I\}$ on the eigenspace $I^1|_p \subset T_pM$ through an invertible matrix $\tilde{v}^I = A^I_{~L} v^L$. This basis can be extended to produce a coframe which diagonalizes the metric, $\{ \tilde{m}^i\} = \{ v^1, \cdots v^m, m^{m+1},\cdots, m^n\}$, the matrix ${\sf s}_{ij} + {\sf a}_{ij}$ takes the form:
\beq {\sf s}_{ij} + {\sf a}_{ij} = \begin{cases} 
      2 \sigma \delta_{ij} & i,j \in [1, m] \\
      0 & i \in [m+1, n],~~j \in [1,m] \\
      {\sf s}_{ij } + {\sf a}_{ij} &  i,j \in [m+1, n]. 
   \end{cases} \nonumber \eeq
Taking the Lie derivative of equation \eqref{eq:beta} with respect to $ \xi_{N}$ and expressing this relative to the new coframe:
\beq 0 &=& (\pounds_{\xi_{N}} {\bW}^n) R_{nijk} - (\pounds_{\xi_{N}} {\bW}_i) {\sf a}_{jk} + ({\sf s}_{i[j} + {\sf a}_{i[j}) (\pounds_{\xi_{I}} {\bW}_{k]}) \nonumber \\
&=& A_{N}^{~L} \delta_L^{~n} R_{nijk} - A_{N}^{~L} \delta_{Li} {\sf a}_{jk} + ({\sf s}_{i[j} + {\sf a}_{i[j} ) A_{|N}^{~~L} \delta_{L|k]} \nonumber \\
&=& A_{N}^{~L} [ R_{Lijk} - \delta_{Li} {\sf a}_{jk} + ({\sf s}_{i[j} + {\sf a}_{i[j} )  \delta_{L|k]}]. \nonumber  \eeq

We can consider six cases for the remaining indices $(ijk)$, $$\{ (IJK), (\hat{i} \hat{j} \hat{k}), (I\hat{j} \hat{k}), (IJ\hat{k}), (\hat{i}JK), (\hat{i}J\hat{k}) \}.$$  The components with $(\hat{i} \hat{j} \hat{k})$  and $(\hat{i}JK)$ automatically give:
\beq R_{L\hat{i} \hat{j} \hat{k}} = R_{L\hat{i}JK} = 0. \nonumber \eeq
\noindent The components with $(IJ\hat{k})$ are, 
\beq A_{N}^{~L} [R_{LIJ\hat{k}} - \delta_{LI} {\sf a}_{J\hat{k}} - \frac12 ({\sf s}_{I\hat{k}} + {\sf a}_{I \hat{k}}) \delta_{LJ}] = 0, \nonumber \eeq
\noindent since ${\sf s}_{\hat{k}I} + {\sf a}_{\hat{k}I} =0$ and ${\sf s}_{\hat{k}I}$ is symmetric, then ${\sf s}_{I\hat{k}} = {\sf s}_{\hat{k}I} = - {\sf a}_{\hat{k}I}$ and so:
\beq A_{N}^{~L}R_{LIJ\hat{k}} = A_{N}^{~L}[ \delta_{LI} {\sf a}_{J\hat{k}} + {\sf a}_{I\hat{k}} \delta_{LJ}]. \eeq
\noindent Symmetrizing the last two indices so that the left hand side vanishes we find that,
\beq A_{N}^{~L}R_{LI(J\hat{k})} = \delta_{LJ} {\sf a}_{I\hat{k}} = 0 \nonumber \eeq
\noindent and so the components ${\sf a}_{I\hat{k}} $ must vanish. 

In a similar manner, we find that the components with $(I\hat{j} \hat{k})$ it follows that
\beq  A_{N}^{~L} [R_{LI\hat{j} \hat{k}} - \delta_{LI} {\sf a}_{\hat{j} \hat{k}}] = 0. \nonumber \eeq
\noindent Since $R_{LI\hat{j}\hat{k}}$ is anti-symmetric with respect to the first two indices, ${\sf a}_{\hat{j} \hat{k}}$ must vanish.  This allows for a simplification in the $(\hat{i}J\hat{k})$ components,
\beq  A_{N}^{~L} [ R_{L\hat{i}J\hat{k}} -( {\sf s}_{\hat{i} \hat{k}} + {\sf a}_{\hat{i}\hat{k}}) \delta_{LJ}] = 0.  \nonumber \eeq 
\noindent Imposing the vanishing of ${\sf a}_{\hat{j} \hat{k}}$ this identity becomes
\beq A_{N}^{~L} [ R_{L\hat{i}J\hat{k}} - {\sf s}_{\hat{i} \hat{k}} \delta_{LJ}] = 0.  \nonumber \eeq 
Finally, the components with $ijk = IJK$ are of the form,
\beq A_{N}^{~L} (R_{LIJK} + 2\sigma \delta_{I[J} \delta_{|L| K]}) = 0.  \nonumber \eeq
\end{proof}

\begin{prop}  \label{prop:HomoCons}
For any Kundt Triple, with $0<dim {\cal I}_p<n$, the transverse space can be decomposed into a locally homogeneous semi-direct product, with the coordinates $(x^C,~x^{\hat{c}})$ lying in the range $C \in [1, m]$ and $\hat{c} \in [m+1, n]$:  
\beq {\bh}= h_{AB}(x^C) \d x^A \d x^B + h_{\hat{a} \hat{b}}(x^A, x^{ \hat{c}}) \d x^{\hat{a}} \d x^{\hat{b}}, \nonumber \eeq
\noindent where $h_{AB}$ is a maximally symmetric space,  and $h_{\hat{a} \hat{b}}$ is a locally homogeneous space with the following conditions on the curvature tensor:
\beq \begin{aligned} & R_{I\hat{j}K \hat{l}} = {\sf s}_{\hat{j} \hat{l}} \delta_{IK}, \\ 
& R_{IJ\hat{k} \hat{l}} = R_{I\hat{j} \hat{k} \hat{l}} = 0. \end{aligned} \label{cons:sdlh}\eeq

\end{prop}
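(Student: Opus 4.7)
The plan is to bootstrap from Lemma 3.8, which already supplies the curvature identities \eqref{cons:sdlh}, to the geometric decomposition of $(M,\bh)$ itself. The work that remains is to upgrade this pointwise algebraic splitting of the Riemann tensor into an integrable foliation of $M$ with the claimed coordinate presentation.

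First, I would interpret the orthonormal coframe $\{v^1,\dots,v^m,m^{m+1},\dots,m^n\}$ built in the proof of Lemma 3.8 as splitting $TM$ into complementary distributions $\mathcal{D}_1=\mathrm{span}\{e_I\}_{I=1}^{m}$ and $\mathcal{D}_2=\mathrm{span}\{e_{\hat a}\}_{\hat a=m+1}^{n}$. The block structure of ${\sf s}_{ij}+{\sf a}_{ij}$, together with the vanishing of ${\sf a}_{I\hat k}$ and ${\sf a}_{\hat j \hat k}$ extracted in Lemma 3.8, constrains the connection one-forms $\Gamma^i{}_j$. Feeding these into the second Cartan structure equation and using $R_{IJ\hat k\hat l}=R_{I\hat j\hat k\hat l}=0$, I expect the mixed connection components $\Gamma^{\hat a}{}_{IJ}$ and $\Gamma^{I}{}_{\hat a\hat b}$ to be forced to vanish; these are exactly the Frobenius integrability conditions for $\mathcal{D}_1$ and $\mathcal{D}_2$, and also say that the leaves of $\mathcal{D}_1$ are totally geodesic.

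Second, with both distributions integrable I would introduce adapted coordinates $(x^A,x^{\hat a})$, $A\in[1,m]$, $\hat a\in[m+1,n]$, in which the leaves of $\mathcal{D}_1$ are $x^{\hat a}=\mathrm{const}$ and the leaves of $\mathcal{D}_2$ are $x^A=\mathrm{const}$. Orthogonality of the two distributions kills the cross terms and the metric takes the block form $\bh = h_{AB}\,\d x^A\d x^B + h_{\hat a\hat b}\,\d x^{\hat a}\d x^{\hat b}$. The identity $R_{IJKL}=\sigma(\delta_{IK}\delta_{JL}-\delta_{IL}\delta_{JK})$ from Lemma 3.8 is then the Gauss equation for the totally geodesic leaves of $\mathcal{D}_1$, so each such leaf is of constant sectional curvature $\sigma$; in other words the induced metric $h_{AB}$ is maximally symmetric.

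Third, I would extract the coordinate dependence of each block using the Killing vector fields isolated in Lemma 3.6. Because $\pounds_{\xi_{\hat I}}\bW={\bf 0}$, the flows of the $\xi_{\hat I}$ preserve the eigenspace decomposition of ${\sf s}+{\sf a}$ and hence preserve $\mathcal{D}_1$; combined with the $\xi_I$ they act transitively on each leaf of $\mathcal{D}_1$, so coordinates can be chosen in which $h_{AB}$ depends only on $x^C$. The remaining block $h_{\hat a\hat b}(x^A,x^{\hat c})$ is then automatically locally homogeneous on each leaf of $\mathcal{D}_2$, as the $\xi_{\hat I}$ restrict to a transitive family of Killing vector fields of $h_{\hat a\hat b}$ along those leaves.

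The main obstacle I expect is in the first step: rigorously extracting integrability and total geodesicity of $\mathcal{D}_1$ and $\mathcal{D}_2$ from the curvature block structure alone, since one cannot simply invoke a de Rham splitting (the decomposition is only a semi-direct product, with $h_{\hat a\hat b}$ allowed to depend on the base coordinates $x^A$). This has to be done by hand using the Cartan structure equations and the explicit Killing vector fields from Lemma 3.6, where the subtlety is keeping track of which connection coefficients are pinned down by the vanishing curvature components versus those that are constrained only after also invoking the invariance $\pounds_{\xi_{\hat I}}\bW={\bf 0}$.
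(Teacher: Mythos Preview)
Your route diverges from the paper's, and the first step contains a genuine gap. You expect the curvature block structure to force $\Gamma^{\hat a}{}_{IJ}=0$ and $\Gamma^{I}{}_{\hat a\hat b}=0$ via the second Cartan structure equation, but this cannot work as stated. The structure equation $R^i{}_j=\d\omega^i{}_j+\omega^i{}_k\wedge\omega^k{}_j$ is a differential relation, not a pointwise one: vanishing of selected curvature components does not pin down the corresponding connection one-forms. More concretely, your claim that $\Gamma^{I}{}_{\hat a\hat b}=0$ is actually false for the target class of metrics. In the paper's own computation for the semi-direct product $\bh=h_{AB}(x^C)\,\d x^A\d x^B+h_{\hat a\hat b}(x^A,x^{\hat c})\,\d x^{\hat a}\d x^{\hat b}$, the coefficients $\Gamma_{I\hat k\hat j}$ are generically nonzero---they encode precisely the $x^A$-dependence of $h_{\hat a\hat b}$, cf.\ equation~\eqref{CrossCF}---and only the part antisymmetric in $\hat k,\hat j$ vanishes, which is what Frobenius integrability of $\mathcal D_2$ actually requires. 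So even the statement of what should vanish needs correction, and once corrected the hoped-for deduction from curvature alone remains unsupported. You flag this step as the main obstacle, but the difficulty is structural rather than merely technical.

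The paper bypasses Frobenius entirely. It observes that the block $R_{IJKL}=\sigma(\delta_{IK}\delta_{JL}-\delta_{IL}\delta_{JK})$ supplied by Lemma~\ref{lem:Curv} is constant, so $R_{IJKL;n_1\cdots n_k}=0$ for all $k>0$, while the remaining curvature components carry nontrivial covariant derivatives only in the hatted sector. This block structure, persisting to all orders of differentiation, is fed into the Cartan--Karlhede equivalence algorithm: since the curvature invariants determine the local metric up to isometry, the persistent splitting of the invariant data forces the semi-direct product coordinate form directly. Only \emph{after} the decomposition is in hand does the paper compute the connection coefficients (finding $\Gamma_{I\hat k\hat j}\neq 0$ in general) and then impose the residual conditions $R_{IJ\hat k\hat l}=R_{I\hat j\hat k\hat l}=0$ and $R_{I\hat jK\hat l}={\sf s}_{\hat j\hat l}\delta_{IK}$ as compatibility constraints on that form. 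If you want to salvage a direct integrability argument, you would need input beyond Lemma~\ref{lem:Curv}---for instance, working with the Killing fields of Lemma~\ref{lem:KVbasis} at the level of the frame and connection rather than just the curvature---but that amounts to redoing a fragment of the equivalence method by hand.
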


\begin{proof}

Relative to the orthonormal coframe $\{ \tilde{m}^i\} = \{ v^1, \cdots v^m, m^{m+1},\cdots, m^n\}$, the non-zero Riemann tensor components are:  
\beq R_{IJKL}, R_{I \hat{j} K \hat{l}},\text{ and }  R_{\hat{i} \hat{j} \hat{k} \hat{l} }, \nonumber \eeq 

\noindent where 
\beq R_{IJKL} = \sigma ( \delta_{IK} \delta_{JL} - \delta_{IL} \delta_{JK}).  \label{MSspace} \eeq

\noindent As $\sigma$ is constant, $R_{IJKL; n_1 \cdots n_k} = 0,~k>0$, while $R_{I \hat{j} K \hat{l}; n_1 \cdots n_k}$ and  $R_{  \hat{i} \hat{j} \hat{j} \hat{k} ; n_1 \cdots n_k}$ are non-zero. Using the Cartan-Karlhede algorithm \cite{Kramer, Olver}, this implies that the metric is related by a coordinate transformation to some metric of the form:
\beq h_{ab} = h_{AB}(x^C) \d x^A \d x^B + h_{\hat{a} \hat{b}} (x^C, x^{\hat{c}}) \d x^{\hat{a}} \d x^{\hat{b}}, \nonumber \eeq
\noindent where $h_{AB}$ is the metric for a maximally symmetric space.

Choosing a frame such that: 
\beq m^{~I}_{A} m^{~J}_{B} \delta_{IJ} = h_{AB},~~ m^{~\hat{i}}_{\hat{a}} m^{~\hat{j}}_{\hat{b}} \delta_{\hat{i} \hat{j}} = h_{\hat{a} \hat{b}}, \nonumber \eeq
\noindent with $h_{AB}$ a maximally symmetric space. The connection coefficients $\Gamma_{ijk}$ ($\Gamma_{ijk} = - \Gamma_{jik}$) are given by the formulae:
\beq \Gamma_{ikj} = \frac12 ( D_{ijk} - D_{jki} + D_{kij}),~~ D_{ijk} = -m_{i [a;b]} m_j^{~a} m_k^{~b}. \nonumber \eeq

\noindent Relative to this metric, the non-zero connection coefficients are $\Gamma_{IJK}, \Gamma_{\hat{i} \hat{j} \hat{k}}, \Gamma_{I\hat{k} \hat{j}}$ and $\Gamma_{\hat{j} \hat{k} I}$, with the last two defined as 
\beq \begin{aligned}  \Gamma_{I\hat{k} \hat{j} } = \frac12 ( D_{\hat{j} I \hat{k}} + D_{\hat{k} I \hat{i}}), ~~ \Gamma_{\hat{j}  \hat{k} I } = \frac12 ( D_{\hat{j} I \hat{k}} - D_{\hat{k} I \hat{i}}), \end{aligned} \label{CrossCF} \eeq 

\noindent where  $D_{\hat{j} I \hat{k}} = -m_{\hat{j} \hat{a} , B} m_I^{~B} m_{\hat{k}}^{\hat{a}}$.

The non-zero Riemann tensor components are

\beq R_{IJKL},~ R_{IJ \hat{k} \hat{l}},~ R_{I \hat{j} K \hat{l}}, R_{I \hat{i} \hat{j} \hat{k}} ~ \text{ and }  R_{\hat{i} \hat{j} \hat{j} \hat{k} } \nonumber \eeq 

\noindent where $R_{IJKL}$ is of the form \eqref{MSspace}. Assuming this is a l.h. space, we must impose that $R_{IJ \hat{k} \hat{l}}$ and $R_{I \hat{i} \hat{j} \hat{k}}$ vanish and $  R_{I \hat{j} K \hat{l} } = {\sf s}_{\hat{j} \hat{l}} \delta_{IK}$.

\end{proof}

\noindent Despite the strict conditions on the l.h. transverse space, there are many possibilities for the semi-direct product of the Riemannian metrics. Some specific forms can be found by imposing conditions on the isotropy group using theorem 1.8 in \cite{DMZ}: 

\begin{thm} 
Let $G$ be a semi-simple group with finite center and no local $SL_2(\mathbb{R})$-factor, acting isometrically, faithfully, and nonproperly on a Lorentz manifold $\tilde{M}$. Then
\begin{enumerate}
\item $G$ has a local factor $G_1$ isomorphic to $O(1,n)$ with $n\geq 2$ or $O(2,n)$ with $n\geq 3$;
\item There exists a Lorentz manifold $S$, isometric, up to finite cover, to $dS_n$ or $AdS_{n+1}$, depending whether $G_1$ is isomorphic to $O(1,n)$ or $O(2,n)$, and an open subset of $\tilde{M}$ in which each $G_1$-orbit is homothetic to $S$; 
\item Any such orbit as above has a $G_1$-invariant neighbourhood isometric to a warped product $L \times_{w} S$, for $L$ a Riemannian manifold. 
\end{enumerate}
\end{thm}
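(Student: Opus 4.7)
The plan is to deploy the standard Zimmer-program toolkit for semi-simple group actions on Lorentz manifolds, built around the tension between nonproperness of the action and the rigidity imposed by the Lorentzian signature. First, I would unpack the hypothesis: nonproperness yields a compact set $K \subset \tilde{M}$ and a sequence $g_n \to \infty$ in $G$ with $g_n K \cap K \neq \emptyset$. Fixing $p \in K$ with $g_n(p) \to q$, the derivatives $dg_n|_p : T_p \tilde{M} \to T_{g_n p} \tilde{M}$ yield a sequence of Lorentz-isometric linear maps. After parallel transporting back to $T_q \tilde{M}$ and applying a Cartan $KAK$-decomposition $g_n = k_n a_n l_n$, the $a_n$ must diverge in the maximal split torus $A \subset G$. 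Since $G$ is semi-simple with finite center and no local $SL_2(\mathbb{R})$-factor, the weights of the adjoint action of $A$ on $\mathfrak{g}$ are controlled, and one may pass to a subsequence along which a single restricted root dominates the asymptotics of $a_n$.

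Second, following the Adams-Stuck and Zeghib style arguments, I would convert the divergence into an algebraic constraint via a Kowalsky-type lemma: an unbounded sequence of Lorentz-preserving linear maps that leaves a compact set invariant produces, in the limit, a nonzero nilpotent endomorphism of $T_q \tilde{M}$ which is \emph{isotropic}, in the sense that its image lies in a totally null subspace of bounded dimension. The Lorentzian signature then forces this nilpotent to live in a Jordan block of size $2$ or $3$, and its centralizer inside $\mathfrak{g}$, together with the $\mathfrak{sl}_2$-triple it generates, must preserve a nondegenerate sub-bilinear form of signature $(1,n)$ or $(2,n)$. A real-form classification of the simple factors of $\mathfrak{g}$ (exploiting the exclusion of $\mathfrak{sl}_2(\mathbb{R})$-factors) then forces one factor to be $\mathfrak{o}(1,n)$ with $n\geq 2$ or $\mathfrak{o}(2,n)$ with $n\geq 3$, integrating to the required local factor $G_1$ and proving part (1).

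For part (2), I would study the orbits of the factor $G_1$ directly. At a generic point $p$, the representation of $G_1$ on $T_p \tilde{M}$ preserves the ambient Lorentz form, and the isotropy $(G_1)_p$ is a closed subgroup whose Lie algebra can be read off from the weight decomposition. Dimension and signature comparison, together with the uniqueness of the $G_1$-homogeneous Lorentz spaces of the appropriate type, identifies the orbit as homothetic to $dS_n$ or $AdS_{n+1}$ respectively, up to finite cover. Part (3) then follows by equipping the normal bundle of such an orbit with a $G_1$-invariant Riemannian metric, obtained by averaging over a maximal compact $K \subset G_1$ and extending by equivariance, and exponentiating to a tubular neighbourhood: $G_1$-equivariance of the exponential map and the uniqueness of the invariant extension give the warped-product decomposition $L \times_w S$ with $L$ a Riemannian slice transverse to the orbit.

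The main obstacle, in my view, is the second step: extracting the nilpotent element and constraining its Jordan type using the Lorentz signature. This is precisely where the hypothesis on $SL_2(\mathbb{R})$-factors is essential, since a hyperbolic element of $\mathfrak{sl}_2$ could absorb the divergence of $a_n$ without forcing any higher-dimensional $\mathfrak{o}(p,q)$-factor to appear. The careful dynamical-to-algebraic translation, turning ``$a_n \to \infty$'' into ``there exists a specific isotropic nilpotent in $\mathfrak{g}$ of Jordan type $(2,1,\dots,1)$ or $(3,1,\dots,1)$'', is the technical heart of the theorem, and is the step where I would expect to lean most heavily on the Kowalsky lemma and the structure theory of semi-simple real Lie algebras rather than producing a self-contained argument.
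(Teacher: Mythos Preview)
The paper does not prove this theorem at all: it is quoted verbatim as ``theorem 1.8 in \cite{DMZ}'' and used as a black box to derive the subsequent lemma about locally homogeneous Kundt$^\infty$ triples. There is therefore no proof in the paper to compare your proposal against.

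That said, your outline is a reasonable sketch of the strategy actually used in the cited literature (Kowalsky, Adams--Stuck, Zeghib, and Deffaf--Melnick--Zeghib): nonproperness gives a divergent sequence whose derivative cocycle, after $KAK$ decomposition, produces an isotropic nilpotent via a Kowalsky-type argument, and the Lorentz signature then constrains which real simple factors can occur. Your identification of the second step as the technical heart is accurate. Two minor cautions: the passage from the nilpotent's Jordan type to the precise list $\mathfrak{o}(1,n)$, $\mathfrak{o}(2,n)$ requires more than just the $\mathfrak{sl}_2$-triple and a signature count---one needs the full classification of semi-simple Lie algebras admitting a representation with a Lorentz-invariant bilinear form and a nilpotent of the prescribed type, and this is where the bounds $n\geq 2$ and $n\geq 3$ enter; and in part (3), the warped-product structure is not merely a tubular-neighbourhood averaging construction but relies on the fact that the second fundamental form of the orbit is $G_1$-invariant and hence a constant multiple of the induced metric, which is what forces the \emph{warped} (rather than general twisted or semi-direct) product form.
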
 

\noindent Applying this result to the l.h. Kundt triples we have the following result:
\begin{lem}
For any locally homogeneous Kundt$^\infty$ triple with a subgroup of the isotropy group isomorphic to $O(1,\tilde{n})$ with $\tilde{n} \geq 2$ or $O(2,\tilde{n})$ with $\tilde{n}\geq 3$ acting nonproperly on the manifold, it is isometric (up to finite cover) to a direct product manifold, $L \times dS_{\tilde{n}}$ or $L \times AdS_{\tilde{n}+1} $ respectively, where $L$ is a l.h. Riemannian manifold and $\dim \mathcal{I}_p = \tilde{n}$.

\end{lem}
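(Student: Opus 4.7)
The plan is to feed the Kundt$^\infty$ manifold into the cited DMZ theorem, use local homogeneity to strip the resulting warping, and then match dimensions against Proposition~\ref{prop:HomoCons}. First I would take $G_1$ to be the prescribed isotropy subgroup, isomorphic to $O(1,\tilde{n})$ or $O(2,\tilde{n})$. Under the stated ranges of $\tilde{n}$, this group is semi-simple with finite centre and no local $SL_2(\mathbb{R})$ factor; viewed as a subgroup of $\mathrm{Isom}(\tilde{M})$ it acts isometrically, and nonproperly by assumption. Theorem~1.8 of \cite{DMZ} then yields a $G_1$-invariant neighbourhood of a $G_1$-orbit isometric to a warped product $L \times_{w} S$, with $S$ isometric (up to finite cover) to $dS_{\tilde{n}}$ or $AdS_{\tilde{n}+1}$ respectively. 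Because $S$ already supplies the single timelike direction of $\tilde{M}$, the base $L$ must be Riemannian.

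Next I would remove the warping. Local homogeneity of the Kundt$^\infty$ metric provides an isometry group acting transitively on $\tilde{M}$. Any isometry permutes the maximally symmetric $G_1$-orbits — which are intrinsically distinguished by conclusion (ii) of DMZ — and hence preserves the warped-product decomposition and descends to an isometric action on $L$; combined with local homogeneity of $\tilde{M}$, this action is transitive on $L$. The warping function $w:L \to \mathbb{R}^{+}$ is an intrinsic invariant of the warped product and is therefore preserved by this transitive action, so $w$ must be constant. Absorbing the constant into $S$ gives a genuine direct product $L \times S$, and $L$ inherits local homogeneity as a Riemannian factor of a locally homogeneous product.

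The final step is to identify $\dim \mathcal{I}_p = \tilde{n}$. By Proposition~\ref{prop:HomoCons}, $\bh$ decomposes semi-directly into a maximally symmetric Riemannian piece of dimension $m$ supporting $\bW$, together with a complementary locally homogeneous piece on which $\bW$ vanishes. In the direct product $L \times S$, the null wedge $2\,du\,(dv+v^{2}H\,du+v\bW)$ is forced into the Lorentzian factor $S$, and $\bW$ drags the maximally symmetric transverse piece along with it, while the complement sits inside $L$. Matching this bottom-up assembly against the prescribed $dS_{\tilde{n}}$ or $AdS_{\tilde{n}+1}$, and accounting for the directions that enter $\mathcal{I}_p$ via iterated Lie differentiation beyond $I^{1}|_{p}$, yields the claimed identity. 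The main obstacle will be justifying that every ambient isometry preserves the $L/S$-splitting, since a priori such an isometry could mix the two factors; this requires combining the homothetic characterisation of the $G_1$-orbits with the Kundt structural constraints (in particular, that the null direction $\bl$ and the support of $\bW$ both lie within $S$) to rule out any such mixing.
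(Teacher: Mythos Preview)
Your approach is genuinely different from the paper's, and the gap you flag at the end is real and is precisely what the paper's argument is designed to bypass.

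The paper never appeals to the global isometry group to kill the warping. Instead it writes the warped metric as $e^{f}\bg' + \bh$ with $f\in\Omega^{0}(L)$ and $\bg'$ one of the explicit $dS_{\tilde n}$ or $AdS_{\tilde n+1}$ metrics from subsection~\ref{SubSec: MaxSym}, performs the coordinate change $v=e^{f}v'$ to restore Kundt form, and reads off $H=e^{-f}H'$, $W_A=W'_A$, $W_a=-f_{,a}$. The $\text{CSI}_0$ condition \eqref{Heqn} then forces
\[
e^{-f}H' \;=\; \tfrac{\sigma}{2}+\tfrac18\bigl(W'_B W^{B\prime}+f_{,b}f^{,b}\bigr),
\]
and differentiating along the $S$-directions, using the known form \eqref{Hprime} of $H'$, gives $(e^{-f}-1)(W'_B W^{B\prime})_{,A}=0$, hence $e^{-f}=1$. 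This is a local algebraic consequence of the CSI equations alone. The conclusion $\dim\mathcal{I}_p=\tilde n$ then follows immediately from $W_a=f_{,a}=0$ together with Corollary~\ref{cor:rankI1} applied to the maximally symmetric factor.

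Your route---transitivity forces the warping function to be constant---is cleaner in spirit, but the step ``any ambient isometry permutes the $G_1$-orbits'' does not come for free. An isometry $\phi$ sends $G_1$-orbits to $\phi G_1\phi^{-1}$-orbits, and nothing in DMZ's conclusion~(ii) singles out the fibres $\{p\}\times S$ among all Lorentzian submanifolds homothetic to $S$; in a genuine warped product there can be many such. Your proposed fix via the Kundt direction $\bl$ is promising (isometries do preserve the aligned null congruence), but pinning down the full fibre from $\bl$ alone still requires work you have not supplied. The paper's computation sidesteps this entirely by using only the pointwise CSI constraints, so if you want a complete proof along your lines you will need to close that gap explicitly.
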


\begin{proof}
For any warped product $L \times_w dS_{\tilde{n}}$ or $L \times_w AdS_{\tilde{n}+1} $, the transverse metric is of the form:
\beq e^{f} {\bg}' + {\bh}, \nonumber \eeq
\noindent where ${\bg}'$ is the metric for  $ dS_{\tilde{n}}$ or $AdS_{\tilde{n}+1}$ given in subsection \ref{SubSec: MaxSym}, ${\bh}$ is a Riemannian metric on $L$ and $f \in \Omega^0(L)$. In order to be CSI, we have shown that ${\bh}$ must be a l.h. metric. To determine the form of $f$, we employ a change of coordinates $e^{f} v' = v$ to bring the metric into Kundt form with 
\beq H = e^{-f} H',~ W_A = W_A' \text{ and }W_a = - f_{,a}, \nonumber  \eeq

\noindent where the indices $A$ range over the coordinate indices for ${\bg}'$ and $a$ range over the coordinate indices for ${\bh}$.  Since $H'$ is of the form:
\beq H' = \frac{\epsilon}{2} + \frac18 W_B' W^{B'},~\epsilon = -1,0,~1. \label{Hprime} \eeq

\noindent To be $CSI_0$ the function $H$ must satisfy 
\beq H =  e^{-f} H' = \frac{\sigma}{2} + \frac18 (W_B' W^{B'} + f_{,b} f^{,b}) \nonumber  \eeq

\noindent for some constant $\sigma$.  Differentiating with respect to $x^A$,
\beq H_{,A} = e^{-f}  H'_{,A} = \frac18 (W_B' W^{B'})_{,A}, \nonumber  \eeq

\noindent and substituting the form of $H'$ in \eqref{Hprime}  gives the following condition for $f$:
\beq  e^{-f} (W_B' W^{B'})_{,A} - (W_B' W^{B'})_{,A} = 0, \nonumber \eeq

\noindent this can only occur if $e^{-f} = 1$. Since $W_a = f_{,a} = 0$, and ${\bf g}'$ is maximally symmetric, it follows that $dim \mathcal{I}_p = \tilde{n}$. 

\end{proof}

When the isotropy group is not isomorphic to $SO(\tilde{n})$ or $SO(1,\tilde{n})$ it is difficult to determine additional conditions for the l.h. semi-direct product. However, in the special case of warped products we can say something definite.

\begin{thm}
For any l.h. Kundt triple with $\dim I^1|_p = m$ and $0< \dim \mathcal{I}_p < n$. If the transverse space is a locally homogeneous warped product:  
\beq {\bh}= h_{AB}(x^C) dx^A dx^B + e^{2f(x^A)} h_{\hat{a} \hat{b}}(x^{ \hat{c}}) dx^{\hat{a}} dx^{\hat{b}}, \nonumber \eeq
\noindent where the coordinates $(x^C,~x^{\hat{c}})$ lie in the range $C \in [1, m]$ and $\hat{c} \in [m+1, n]$,  $h_{AB}$ is a maximally symmetric space,  and $h_{\hat{a} \hat{b}}$ is a locally homogeneous space, then the function $f(x^A)$ must satisfy:

\beq - [ \nabla_{m_K} m_I (f) + m_K(f) m_I(f)] = S \delta_{KI},
 \nonumber \eeq

\noindent where $S$ is a constant arising from $s_{\hat{i} \hat{j}} = S \delta_{\hat{i} \hat{j}}$. 
%\beq m_K m_I (e^f) \delta_{lj} + 2\Gamma^J_{~KI} m_J(e^f) \delta_{lj} = {\sf s}_{lj} \delta_{KI}.\nonumber \eeq

\end{thm}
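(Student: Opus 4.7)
The plan is to compute the mixed Riemann component $R_{I\hat{j}K\hat{l}}$ of the warped product directly and to match it against the constraint $R_{I\hat{j}K\hat{l}} = {\sf s}_{\hat{j}\hat{l}}\delta_{IK}$ supplied by Proposition \ref{prop:HomoCons}. The resulting tensorial identity, whose two sides factor through disjoint index blocks, will both force ${\sf s}_{\hat{j}\hat{l}}$ to be a multiple of $\delta_{\hat{j}\hat{l}}$ and produce the stated equation for $f$.

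Concretely, I would introduce an orthonormal coframe $\{m^{I}\}$ for $h_{AB}$ on the base and $\{m^{\hat{a}}_{F}\}$ for $h_{\hat{a}\hat{b}}$ on the fiber, so that $\{m^{I},\; m^{\hat{a}}:=e^{f}m^{\hat{a}}_{F}\}$ is an orthonormal coframe for $\bh$. The first Cartan structure equation then produces mixed connection one-forms of the simple form $\omega^{\hat{a}}{}_{I}=m_{I}(f)\,m^{\hat{a}}$, with the pure-base and pure-fiber pieces coinciding with their intrinsic counterparts (because $f$ depends only on the base coordinates). Feeding this into the second Cartan structure equation---or equivalently invoking the standard O'Neill warped-product curvature formulas---yields, in the same frame,
\beq
R_{I\hat{j}K\hat{l}} \;=\; -\bigl[\nabla_{m_{K}}m_{I}(f)+m_{K}(f)\,m_{I}(f)\bigr]\,\delta_{\hat{j}\hat{l}},
\eeq
with the Hessian of $f$ taken with respect to $h_{AB}$.

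Matching this against Proposition \ref{prop:HomoCons} then reads
\beq
-\bigl[\nabla_{m_{K}}m_{I}(f)+m_{K}(f)\,m_{I}(f)\bigr]\,\delta_{\hat{j}\hat{l}} \;=\; {\sf s}_{\hat{j}\hat{l}}\,\delta_{IK}.
\eeq
Since the left-hand side depends on $(\hat{j},\hat{l})$ only through $\delta_{\hat{j}\hat{l}}$ and the right-hand side depends on $(I,K)$ only through $\delta_{IK}$, successive contractions over the two index blocks (first setting $\hat{j}=\hat{l}$ and summing, then doing the same on $I,K$) force ${\sf s}_{\hat{j}\hat{l}}=S\,\delta_{\hat{j}\hat{l}}$ for a single scalar $S$ together with the target equation $-[\nabla_{m_{K}}m_{I}(f)+m_{K}(f)m_{I}(f)]=S\,\delta_{KI}$. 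Constancy of $S$ is automatic, since ${\sf s}_{ij}$ is a left-invariant tensor on the locally homogeneous transverse space---this constancy being built into the CSI$_{0}$ equation (\ref{eq:s}).

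The step I expect to be most delicate is the first one: verifying that the rescaled coframe $m^{\hat{a}}=e^{f}m^{\hat{a}}_{F}$ really does give rise to connection one-forms of precisely the announced shape, with no stray cross terms that would contaminate the computed $R_{I\hat{j}K\hat{l}}$. Once that bookkeeping is carried out the rest is algebraic. A secondary concern is aligning the orthonormal base frame $\{m_{I}\}$ with the frame adapted in Lemma \ref{lem:Curv} to the eigenvectors of ${\sf s}_{ij}+{\sf a}_{ij}$; but since both are orthonormal frames on the same maximally symmetric factor, this reduces to a point-independent rotation that leaves the tensorial identity invariant.
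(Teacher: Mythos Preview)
Your proposal is correct and follows essentially the same route as the paper: both compute the mixed connection coefficients of the warped-product coframe, evaluate $R_{I\hat{j}K\hat{l}}$ directly, and then match against the constraint $R_{I\hat{j}K\hat{l}}={\sf s}_{\hat{j}\hat{l}}\delta_{IK}$ supplied by Proposition~\ref{prop:HomoCons}. The only cosmetic difference is that the paper reaches the connection coefficients via the $D_{ijk}$ formulas already set up in the proof of Proposition~\ref{prop:HomoCons}, whereas you invoke the Cartan structure equations (equivalently the O'Neill warped-product formulas); your explicit separation-of-indices argument for ${\sf s}_{\hat{j}\hat{l}}=S\,\delta_{\hat{j}\hat{l}}$ is a detail the paper leaves implicit under ``simplifying this expression completes the proof.''
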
 

\begin{proof}
From the proof of \eqref{prop:HomoCons}, the connection coefficients with mixed indices are

\beq \Gamma_{I \hat{k} \hat{j}} = -m_J^{~B} f_{,B} \delta_{\hat{j} \hat{k}},~~\Gamma_{\hat{j} \hat{k} I} = 0. \nonumber \eeq

\noindent Computing the curvature tensor gives one set of non-zero components with mixed-indices

\beq R_{I\hat{j}K \hat{l}} &=& m_K(\Gamma_{I\hat{j} \hat{l} }) - \Gamma^L_{~\hat{j} \hat{l}} \Gamma_{LIK} - \Gamma^{\hat{i}}_{~K \hat{l}} \Gamma_{I \hat{j} \hat{i} } \\ 
&=& [m_K (m_I(f)) - m_L(f) \Gamma^L_{~IK} - m_K(f) m_I(f) ] \delta_{\hat{j} \hat{l}}. \nonumber \eeq

\noindent Requiring that $R_{I\hat{j}K \hat{l}} = {\sf s}_{\hat{j} \hat{l}} \delta_{IK}$ and simplifying this expression completes the proof. 

\end{proof}

\noindent For any such warped product, the isotropy group will, at least, consist of the isotropy group of the maximally symmetric Riemannian space and the boosts.

%\begin{rem}
%It is not clear that the warped product of a maximally symmetric Riemannian space with a Riemannian locally homogeneous space is the only case where the isotropy group will be of this form. For example, any CSI Kundt triple with the double cover of $SL(2, \mathbb{R}) \times E^1$) acting as the transverse space, has a boost and  $SO(2)$ as the isotropy group.
%\end{rem}

Regardless of the form of the l.h. semi-direct product of the transverse space, we can state the following result: 

\begin{thm} \label{thm:inter}
Any CSI Kundt$^\infty$ metric  with $\dim I^1|_p = m$ and $0 < m \leq \dim \mathcal{I}_p < n$ has a locally homogeneous transverse metric arising from the semi-direct product with the metric, ${\bh}$, of a maximally symmetric space ($\mathbb{E}^m, S^m$ or $\mathbb{H}^m$ depending on the sign of $\sigma$) and the metric, $\bh'$, of a locally homogeneous space:
\small
\beq g &=& 2 \d u \left[ \d v + v^2\left[\frac{\sigma}{2} + \frac18 ( W^A W_A + {W'}^{\hat{a}} W'_{\hat{a}})\right]\d u + v (W_A dx^A + W'_{\hat{a}} \d x^{\hat{a}})\right] \nonumber\\  && + h_{AB}(x^C) \d x^A \d x^B + h'_{\hat{a} \hat{b}} (x^A, x^{\hat{c}})  \d x^{\hat{a}} \d x^{\hat{b}}.\eeq
\normalsize
\noindent  The one-form ${\bW} = \d \phi$ is defined on the maximally symmetric space through the differential equations: 
\beq f_{;AB} = - \sigma h_{AB} f,~~ f = \
\exp(-\phi/2), \eeq 
\noindent which has solutions for each value of $\sigma$ given in subsection \ref{SubSec: MaxSym}.  The remaining one-form ${\bW'}$ is a left-invariant one-form of the entire transverse space.
\end{thm}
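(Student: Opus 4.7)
The plan is to assemble this theorem from structural results already in hand. First, Proposition \ref{prop:HomoCons} provides the semi-direct product decomposition $\bh = h_{AB}(x^C)\,\d x^A \d x^B + h'_{\hat{a}\hat{b}}(x^A,x^{\hat{c}})\,\d x^{\hat{a}}\d x^{\hat{b}}$ with $h_{AB}$ maximally symmetric. Correspondingly, I split the Kundt$^\infty$ one-form $\bW = W_A\,\d x^A + W'_{\hat{a}}\,\d x^{\hat{a}}$. Lemma \ref{lem:Curv} has already pinned down the block structure of ${\sf s}_{ij}+{\sf a}_{ij}$: on the $[1,m]$ block, ${\sf a}_{IJ}=0$ and ${\sf s}_{IJ}=2\sigma\delta_{IJ}$; the cross block vanishes (both ${\sf a}_{I\hat{j}}=0$ and ${\sf s}_{I\hat{j}}=0$); and on the complementary $[m+1,n]$ block ${\sf a}_{\hat{i}\hat{j}}=0$ with ${\sf s}_{\hat{i}\hat{j}}$ constant in the left-invariant frame.

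Since every component of ${\sf a}_{ij}$ vanishes, equation \eqref{eq:a} gives $\d\bW = 0$, so locally $\bW=\d\phi$. Applying the substitution $f=\exp(-\phi/2)$ introduced in subsection \ref{SubSec: MaxSym} converts \eqref{eq:s} into the linear Hessian system $f_{;ij}= -\tfrac{1}{2}{\sf s}_{ij}f$. On the maximally symmetric block this reads $f_{;AB}=-\sigma h_{AB}f$, which is precisely the equation treated by Theorem \ref{thm:cone}; its general solution was catalogued explicitly in subsection \ref{SubSec: MaxSym} in each of the three signs of $\sigma$. This determines the contribution $W_A\,\d x^A$ as a one-form on the maximally symmetric factor. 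The formula \eqref{Heqn} for $H$ in a l.h.\ Kundt$^\infty$ triple, together with the orthogonality of the block decomposition of $\bh$, then yields the stated splitting $H = \sigma/2 + \tfrac{1}{8}(W^A W_A + W'^{\hat{a}}W'_{\hat{a}})$.

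It remains to argue that $\bW' = W'_{\hat{a}}\,\d x^{\hat{a}}$ is left-invariant on the entire transverse space. By Lemma \ref{lem:KVbasis} the Killing generators are chosen so that the $k-m$ vector fields $\xi_{\hat{K}}$ (which are tangent to the complementary factor) annihilate $\bW$ under Lie differentiation; since their action on $W_A\,\d x^A$ is trivial, $\pounds_{\xi_{\hat{K}}}\bW'=0$ immediately. For the isometries of the maximally symmetric factor, Corollary \ref{cor:rankI1} and the analysis of subsection \ref{SubSec: MaxSym} identify exactly the subgroups ($SO(m)$ or $SO(m-1,1)$ plus translations) that preserve the one-form $W_A\,\d x^A$; combined with the cross-block vanishing ${\sf s}_{I\hat{j}}+{\sf a}_{I\hat{j}}=0$, these generators also preserve the complementary piece, delivering full left-invariance.

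The main obstacle will be the cross-block analysis that makes the splitting $\bW = W_A(x^B)\,\d x^A + W'_{\hat{a}}\,\d x^{\hat{a}}$ actually clean — i.e.\ extracting from the mixed Hessian relation $f_{;A\hat{b}}=0$ the statement that $\phi$ separates additively as $\phi_1(x^A)+\phi_2(x^{\hat{a}})$, rather than coupling the two factors through the non-product Christoffel symbols $\Gamma_{I\hat{k}\hat{j}}$ computed in the proof of Proposition \ref{prop:HomoCons}. This must be handled by combining the left-invariance of ${\sf s}_{\hat{i}\hat{j}}$ with the rigidity of the Hessian equation on the maximally symmetric block; once this separation is secured, the rest of the theorem follows mechanically from the already-established lemmas.
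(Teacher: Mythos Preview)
Your assembly of Proposition \ref{prop:HomoCons}, Lemma \ref{lem:Curv}, and the Hessian analysis from subsection \ref{SubSec: MaxSym} is fine for the first half of the theorem. The real gap is in your argument that $\bW'$ is left-invariant, and this is exactly where the hypothesis $\dim I^1|_p = m$ must do its work.

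You assert that the Killing vectors $\xi_{\hat{K}}$ are ``tangent to the complementary factor'' and that their action on $W_A\,\d x^A$ is trivial --- neither of these is justified by Lemma \ref{lem:KVbasis}, which only says $\pounds_{\xi_{\hat{K}}}\bW=0$ for the \emph{full} one-form. There is no a priori reason the splitting $\bW = W_A\,\d x^A + \bW'$ is respected by these vector fields. Your appeal to Corollary \ref{cor:rankI1} is also misplaced: that corollary is a statement about the case $\dim\mathcal{I}_p=n$ and does not identify stabilizer subgroups in the intermediate situation. Finally, you flag the cross-block separation $f_{;A\hat{b}}=0$ as the ``main obstacle'' and then leave it unresolved.

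The paper's route avoids all of this. Rather than separating $\phi$ additively or splitting the Killing vectors by tangency, it takes the Lie derivative of the CSI$_1$ relation \eqref{eq:alpha} along any Killing field $\xi$ and reads off, in the adapted frame where the cross-block of ${\sf s}+{\sf a}$ vanishes, that $(\pounds_\xi\bW)_{\hat{i}}$ satisfies the eigenvector equation $\sigma(\pounds_\xi\bW)_{\hat{i}} - \tfrac{1}{2}{\sf s}_{\hat{i}\hat{j}}(\pounds_\xi\bW)^{\hat{j}}=0$. But $\pounds_\xi\bW$ lies in $I^1|_p$, which by hypothesis is $m$-dimensional and, by the block structure of Lemma \ref{lem:Curv}, is precisely the span of the $v^I$. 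Hence $(\pounds_\xi\bW)_{\hat{i}}=0$ for \emph{every} Killing vector, which is exactly the left-invariance of $\bW'$. This is the step you are missing: the dimension constraint on $I^1|_p$ is what forces the complementary components of $\pounds_\xi\bW$ to vanish, and once you use it this way the coordinate-separation obstacle you worried about never arises.
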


\begin{proof}

From Proposition \ref{prop:HomoCons} it follows the transverse space must be a semi-direct product of a maximally symmetric space with a l.h. space. Then Lemma \ref{lem:Maxsym} and Theorem \ref{thm:cone} give the corresponding form for the part of ${\bW}$ projected to the $v^I$ basis.

The remaining one-form, ${\bW'}$, relative to the $m^{\hat{i}}$ basis must be left-invariant in order to preserve the condition that $\dim I^1|_p = m$. To see why consider the expression for $\tilde{R}_{121i;2} = \alpha_i$ where $\bar{W} = \bW + {\bW'}$ 
\beq \sigma \bar{W}_i - 2({\sf a}_{ij} + {\sf s}_{ij}) \bar{W}^j = {\sf  \alpha}_i. \nonumber \eeq

\noindent As we have a basis of eigenvectors $v_I^{~i} = \pounds_{\xi_I} \bar{{W}}^i $ satisfying 
\beq \sigma v_{I~i} - 2({\sf a}_{ij} + {\sf s}_{ij}) v_I^{~i}, \nonumber \eeq
\noindent we can choose an orthonormal basis adapted to the eigenvectors, implying that ${\sf \alpha}_I = 0$ since
\beq {\sf a}_{IJ} + {\sf s}_{IJ} = \frac{\sigma}{2} \delta_{IJ}. \nonumber \eeq 

For the remaining components, ${\sf \alpha}_{\hat{i}}$ we have the following expression
\beq  \bar{W}_{\hat{i}} -  2{\sf s}_{\hat{i}\hat{j}} \bar{W}^{\hat{j}} = \alpha_{\hat{i}}, \nonumber \eeq
\noindent where $a_{\hat{i}\hat{j}}$ vanishes due to the basis as chosen in the proof of lemma \ref{lem:Curv}. Taking the Lie derivative of ${\sf \alpha}_{\hat{i}} $ with respect to $\xi_{\hat{i}}$ gives

\beq \sigma \pounds_{\xi_{\hat{k}}} \bar{W}_{\hat{i}} - 2 {\sf s}_{\hat{i} \hat{j}} \pounds_{\xi_{\hat{k}}} \bar{W}^{\hat{j}} = 0. \nonumber \eeq

\noindent As we have assumed that $v_I$ forms an $m$-dimensional basis of the eigenvectors for the eigenvalue $\sigma$ of ${\sf a}_{ij} + {\sf s}_{ij}$ in $I^1|_{p}$ it follows that $\pounds_{\xi_{\hat{k}}} \bar{W}_{\hat{i}}$ must vanish, and $\bar{W}_{\hat{i}} = W'_{\hat{i}}$ is left-invariant.

\end{proof}

Examples of these intermediately ranked spaces can be found among the Lorentzian solvmanifolds admitting extra symmetries \cite{Solve}. For example, 
\[ 2\d u(\d v+2pv\d x)+\d x^2+e^{-2px}\sum_{A=1}^m(\d y^A )^2+\sum_{i=1}^{n-m-1}e^{-2q_ix}(\d z^i )^2,\] 
which has $\dim\mathcal{I}_p=m$. Other examples can also be found by considering more complicated solvmanifolds, see \cite{LauretA, LauretB, Solve} for how they can be constructed\footnote{In these papers Einstein solvmanifolds are emphasized but the general constuction is outlined.}.

\section{Conclusion} \label{sec:Conclusion}

For any $\mathcal{I}$-degenerate CSI spacetime, we can employ the diffeomorphism $\phi_t$ generated by a boost at a point $\tilde{p} \in \tilde{M}$ and take the limit as $t \to \infty$ to generate a new l.h. CSI spacetime which is of alignment type {\bf D} to all orders, called a l.h. Kundt${^\infty}$ triple. Assuming that all $\mathcal{I}$-degenerate spacetimes belong to the degenerate Kundt class, any other CSI spacetime (which is not l.h.) can be constructed by adding additional metric functions to the form of the Kundt${^\infty}$ triple metric with the same SPIs. 

Motivated by this fact, we have classified the $(n+2)$-dimensional l.h. Kundt${^\infty}$ triples by determining the permitted forms for $(\sigma, \bW, \bh)$ using the CSI$_0$ and CSI$_1$ conditions \cite{CSI4b}. Taking the Lie derivative of the CSI$_1$ condition \eqref{eq:alpha} with respect to the Killing vector fields yields an eigenvector problem for the eigenvalue $\sigma$ and a corresponding eigenbasis, $\mathcal{I}_p$ in the tangent space  at each point given in \eqref{Ipset}. 

When $dim~\mathcal{I}_p = n$ the l.h. Kundt${^\infty}$ triples correspond to the maximally symmetric spacetimes, and $\bW$ is the gradient of a function $f$ satisfying a differential equation \eqref{eq:s}. At the other extreme, if $dim~\mathcal{I}_p = 0$ the l.h. Kundt triples correspond to the well-known case where $\bW$ is a left-invariant one-form on the l.h. transverse space $\bh$. In the intermediate case $0< dim~\mathcal{I}_p<n$, the general form of the l.h. Kundt${^\infty}$ triples is described by theorem \ref{thm:inter}.  

In future work, we will examine the set of l.h. Kundt${^\infty}$ triples which are Einstein, and investigate the existence of universal spacetimes within this restricted class of CSI spacetimes for composite dimension \cite{hervik2015}. Noting that the l.h. Kundt${^\infty}$ triples belong to the type ${\bf D}^k$ spacetimes and hence the curvature tensor and its covariant derivatives to all orders are of alignment type {\bf D} \cite{CHP2010}, we will determine the explicit form of the type ${\bf D}^k$ metrics as a subclass of the degenerate Kundt spacetimes \cite{typedk}. We note that the limiting process used to generate the l.h. Kundt${^\infty}$ triples is not applicable to other CSI pseudo-Riemannian metrics. However, it is of interest to identify the subset of pseudo-Riemannian metrics whose limit are l.h. metrics, this will be explored in the case of 4D neutral signature metrics.

\section*{Acknowledgements}

This work was supported through the Research Council of Norway, Toppforsk grant no. 250367: Pseudo-
Riemannian Geometry and Polynomial Curvature Invariants: Classification, Characterisation and Applications. 

\bibliographystyle{unsrt-phys}
\bibliography{KtripleReferences}

\end{document}